\newtheorem{theorem}{Theorem}
\newtheorem{definition}[theorem]{Definition}
\newtheorem{lemma}[theorem]{Lemma}
\newtheorem{remark}{Remark}
\newtheorem{proposition}[theorem]{Proposition}
\newcommand{\reals}{\mathbb{R}}											% The set of real numbers
\newcommand{\prob}{\mathbb{P}}											% The probability symbol
\newcommand{\expe}{\mathbb{E}}											% The expectation symbol
\newcommand{\todo}[1]{}													% ToDo macro
\title{ \textbf{Optimality of the Laplace Mechanism in Differential Privacy} }
\author{Fragkiskos Koufogiannis, Shuo Han, George J. Pappas 
		\thanks{Authors are with the Department of Electrical and Systems Engineering, University of Pennsylvania, PA, USA.}
		\thanks{This work was supported in part by the TerraSwarm Research Center, one of six centers supported by the STARnet phase of the Focus Center Research Program (FCRP) a Semiconductor Research Corporation program sponsored by MARCO and DARPA.} }
\begin{document}

\maketitle

\begin{abstract} In the highly interconnected realm of Internet of Things, exchange of sensitive information raises severe privacy concerns. The Laplace mechanism -- adding Laplace-distributed artificial noise to sensitive data --  is one of the widely used methods of providing privacy guarantees within the framework of differential privacy. In this work, we present Lipschitz privacy, a slightly tighter version of differential privacy. We prove that the Laplace mechanism is optimal in the sense that it minimizes the mean-squared error for identity queries which provide privacy with respect to the $\ell_{1}$-norm. In addition to the $\ell_{1}$-norm which respects individuals' participation, we focus on the use of the $\ell_{2}$-norm which provides privacy of high-dimensional data. A variation of the Laplace mechanism is proven to have the optimal mean-squared error from the identity query. Finally, the optimal mechanism for the scenario in which individuals submit their high-dimensional sensitive data is derived.
\end{abstract}

\section{Introduction}
The Internet of Things (IoT) envisions that everyday devices such as smartphones, power meters, and household appliances will exchange information and provide innovative services such as e-health and assisted living \cite{atzori10}. However, when a device communicates sensitive information (e.g. monitored activities, health records) over a vast network of interconnected \textit{things}, privacy concerns are raised \cite{weber2010internet}. For example, traffic maps can be constructed from aggregating users' GPS traces and users can benefit from such published maps by avoiding congested routes. On the other hand publishing statistics of sensitive data of a population while providing privacy guarantees is not trivial. The Netflix prize is an example were, given publicly released information \cite{bennett2007netflix}, an adversary can partially reconstruct private data \cite{narayanan06}. Accurate, privacy-preserving mechanisms are essential for IoT to provide these services while respecting individuals' privacy \cite{miorandi2012internet}

Significant efforts have been made to address these privacy concerns \cite{venkitasubramaniam2013privacy}, \cite{han2014differentially}, \cite{meingast2006security}, \cite{cardenas2012game}, \cite{weeraddana2013per}, \cite{canepa2013framework}, \cite{leny12}. Intuitively, uncertainty about the private data is introduced by publishing a perturbed response instead of the exact one. In the context of traffic monitoring, virtual trip lines and data cloaking techniques \cite{hoh2007preserving}, \cite{hoh2008virtual} provide privacy against a given adversarial model. In practice, though, an adversary may be more powerful or informed than the model assumptions.  Additionally, an information-theoretic framework based on mutual information was introduced \cite{rajagopalan2011smart}. However, this approach provides privacy guarantees in a probabilistic sense and, therefore, rare, but severe, privacy breaches are possible.

A rigorous notion of privacy is differential privacy which provides formal privacy guarantees without any assumptions on the adversary's power \cite{dwork2013algorithmic} and is the notion used in this work. Specifically, while answering queries from private data, artificial noise is injected. This noise is deliberately designed and ensures that an adversary cannot \textit{confidently infer} any individual's private data, where an \textit{adjacency relation} defines the pairs of inputs that are rendered almost indistinguishable. For tight privacy level, increased amounts of noise are required and, consequently, the accuracy of the noisy response degrades. Thus, a trade-off between privacy level and accuracy exists. Ideally, one would like to design optimal mechanisms that satisfy a predefined privacy level and approximate a given query with minimum mean-squared error.

Several methods for constructing differentially private mechanisms have been proposed. In particular, given a score function for every pair of private input and public response, the exponential mechanism \cite{mcsherry07} provides a powerful way of building a private mechanism, although no performance guarantees were initially provided. The Laplace mechanism is an instance of the exponential mechanism for real, vector-valued private data which adds Laplace-distributed noise $V$ to the private data:
\begin{align} \label{eqn:laplaceDistribution}
	\prob( V = v ) \propto e^{ -\epsilon \|v\|_{1} },
\end{align} 
where $\epsilon\in(0,\infty)$ is the privacy level --- smaller values of parameter $\epsilon$ result to stronger privacy guarantees --- and $\|\cdot\|_{1}$ is the $\ell_{1}$-norm. Near-optimality of the Laplace mechanism for a single integer-valued linear query was presented in \cite{ghosh2012universally}, whereas, for linear queries, asymptotic (in the number of users) sub-optimality bounds were derived for a variant of the Laplace mechanism \cite{hardt10}. For single-dimensional private data, the exact optimality of the ``staircase'' mechanism, a quantized version of the Laplace mechanism, was established in \cite{geng12}. Moreover, the Laplace mechanism was proven to be an entropy-minimizing private mechanism \cite{wang14} under a version of differential privacy for metric spaces \cite{chatzikokolakis13}.

In this work, we establish optimality guarantees for the Laplace mechanism -- adding Laplace-distributed noise \eqref{eqn:laplaceDistribution}. We formalize \textit{Lipschitz privacy} which is a slightly stronger version of differential privacy for metric spaces and allows us to pose the problem of designing optimal privacy-aware mechanisms as optimization problems where privacy requirements are included as constraints. We, first, prove that the Laplace mechanism optimally approximates real-valued private data by achieving the minimum \textit{mean-squared error}. Besides the $\ell_{1}$-norm used in \eqref{eqn:laplaceDistribution}, we focus on the $\ell_{2}$-norm as the appropriate adjacency relation that captures the privacy aspects of sensitive signals, such as GPS and power consumption traces. In the $\ell_{2}$-norm case, we prove the optimality of a variant of the Laplace mechanism. Furthermore, we extend our optimality results to the case of a composite adjacency relation for the scenario when multiple individuals contribute their private signals, e.g. drivers report their GPS traces. 

A brief overview of differential privacy is provided in Section \ref{sec:differentialPrivacy}. In Section \ref{sec:infinitesimalPrivacy}, a version of differential privacy for Euclidean spaces is explored and strong connections with differential privacy are established. Section \ref{sec:optimalityResults} establishes the optimal private mechanism for the case of multi-dimensional identity queries both under $\ell_{1}$ and $\ell_{2}$ norms.  We conclude this work with a discussion in Section \ref{sec:discussion}.

\section{Differential Privacy Overview} \label{sec:differentialPrivacy}
The framework of differential privacy was introduced in \cite{dwork06}, \cite{dwork2013algorithmic}. According to this framework, whenever a query is submitted to private data, the exact response must be perturbed by noise upon release to the public. Formally, the definition of differential privacy is the following:

\begin{definition} \label{def:diffPrivacy}
Let $\epsilon\geq0$ be a given privacy level, $\mathcal{U}$ be the set of possible private data, $\mathcal{A}\subseteq\mathcal{U}^{2}$ be an adjacency relation over the private data, $\mathcal{Y}$ be the set of possible responses, and $\Delta\left(\mathcal{Y}\right)$  be the set of probability measures over (a sufficiently rich $\sigma$-algebra of) $\mathcal{Y}$. A mechanism $Q:\mathcal{U} \rightarrow \Delta\left( \mathcal{Y} \right)$ is $\epsilon$-differentially private if
\begin{align}
\mathbb{P}( Qu \in \mathcal{S} ) \leq e^{\epsilon} \mathbb{P}( Qu' \in \mathcal{S} )
\end{align}
for every $S\subseteq\mathcal{Y}$ and every $u,u'\in\mathcal{U}$ such that $(u,u')\in\mathcal{A}$.
\end{definition}

%% ASSUMPTIONS ON SIGMA ALGEBRAS AND SET OF PROBABILITY MEASURES %%
\begin{remark}
For a given output set $\mathcal{Y}$, we assume the existence of a rich enough $\sigma$-algebra $M\subseteq 2^{\mathcal{Y}}$. Slightly abusing of notation, we write $\mathcal{S}\subseteq\mathcal{Y}$ instead of $\mathcal{S}\in M$. Also, the set of probability measures over $(\mathcal{Y},M)$ is denoted by $\Delta\left( \mathcal{Y} \right)$. For a finite set of responses $\mathcal{Y}$, we assume $M=2^{\mathcal{Y}}$. In this approach, we focus on Euclidean spaces $\mathcal{Y}=\mathbb{R}^{m}$ and the Borel set $M=\mathcal{B}^{m}$.
\end{remark}

Definition \ref{def:diffPrivacy} considers randomized mappings, called mechanisms, from private data in $\mathcal{U}$ to responses in $\mathcal{Y}$. The adjacency relation $\mathcal{A}$ defines the pairs of inputs $(u,u')$ that are rendered almost indistinguishable to an adversary who observes only the response of the mechanism. The level of privacy is controlled by the parameter $\epsilon\geq0$. Complete privacy is guaranteed for $\epsilon=0$, whereas, no privacy is respected for $\epsilon \rightarrow \infty$. A differential private algorithm is a map from private data to distributions over the set of responses. Upon release, the differential private response is given by a single random sample drawn from the distribution.

A differential private mechanism needs to be useful at the same time. For example, a mechanism that responds identically for any input is $0$-differential private, but also useless. To this end, we are interested in mechanisms $Q_{\epsilon}$ that approximate a given query $q:\mathcal{U}\rightarrow\mathcal{Y}$. We say that an $\epsilon$-differential private mechanism is optimal (in the mean-squared sense) if it minimizes the mean-squared error of the desired query $q$. Characterization of the optimal private mechanism is fundamental for efficient applications of differential privacy.

In this work, we present optimal private mechanisms for identity queries under a general adjacency relation. Specifically, we focus on Euclidean spaces and assume each of the $n$ users contributes his $m$-dimensional sensitive data. Let $\mathcal{U}=\mathbb{R}^{n\times m}$ and $\mathcal{Y}=\mathbb{R}^{n \times m}$ and consider the adjacency relation $\mathcal{A}$ defined as:
\begin{align} \label{eqn:hybridAdjacency}
(u,u') \in \mathcal{A} \Leftrightarrow \exists i \text{ s.t. }  \|u_{i}-u'_{i}\|_{2}\leq\alpha \text{ and } u_{j}=u'_{j}, \forall j\neq i.
\end{align}
Adjacency relation \eqref{eqn:hybridAdjacency} respects privacy of every individual's sensitive data $u_{i}$; even if an adversary is aware of every other user's data $u_{j}, j\neq i$, the adversary cannot confidently extract the value $u_{i}$.

%Despite focusing only on identity queries, the applicatios the framework of differential privacy is complemented with numerous results that include the exponential mechanism and resilience to post-processing \cite{mcsherry07} and allow more complex private mechanisms to be constructed from simpler ones.

%Given a prior $\pi\in\Delta\left(\mathcal{U}\right)$ over the inputs and a loss function $l:\mathcal{Y}^{2}\rightarrow\mathbb{R}$, the problem of designing an efficient $\epsilon$-dp mechanism can be stated as following:
%\begin{align} \begin{split}
%\underset{ Q:\mathcal{U}\rightarrow \Delta\left(\mathcal{Y}\right) }{ \text{minimize} } \quad & \underset{u\sim\pi}{\mathbb{E}} \underset{y\sim Qu}{\mathbb{E}} l( q(u), y ) \\
%\text{s.t.} \quad & \mathbb{P}(Qu\in\mathcal{S}) \leq e^{\epsilon} \mathcal{P}(Qu'\in\mathcal{S}), \: \forall u,u':\:\|u-u'\|\leq\alpha, \text{ and } \forall \mathcal{S}\subseteq\mathcal{Y}
%\end{split} \end{align}
%Let $Q_{\epsilon}$ denote the solution of the optimization problem. Except for special cases, solving for the optimal $\epsilon$-dp mechanism is an open problem. The main hardness stems from the infinite dimensionality of the solution space.

%% INTRODUCING LIPSCHTIZ PRIVACY, FORMERLY MENTIONED AS INFINITESIMAL PRIVACY
\section{Differential Privacy as Lipschitz Constraint} \label{sec:infinitesimalPrivacy}

%% FORMAL DEFINITION
In this section, we reformulate differential privacy for metric spaces as a Lipschitz constraint. This reformulation, which we call \textit{Lipschitz privacy}, is closely related to the original notion of differential privacy introduced in \cite{chatzikokolakis13}. In particular, the differential privacy constraint is viewed as a sensitivity constraint. The sensitive data is assumed to be an element of a complete vector space $\mathcal{U}$ equipped with a norm $\|\cdot\|$, and the set of possible responses is denoted by $\mathcal{Y}$. Formally, we provide the definition of Lipschitz privacy:
\begin{definition}[Lipschitz privacy]
Let $\mathcal{U}$ be a metric space and $\mathcal{Y}$ be a set of responses. A mechanism $Q:\mathcal{U}\rightarrow\Delta\left( \mathcal{Y} \right)$ is called $\epsilon$-Lipshcitz differentially private if the log-probability function is $\epsilon$-Lipschitz:
\begin{align} \begin{split} \label{eqn:lipschitzCondition}
| \ln\mathbb{P}(Qu\in\mathcal{S}) - \ln\mathbb{P}(Qu'\in\mathcal{S}) | \leq \epsilon \|u-u'\|, \\
 \forall u,u'\in\mathcal{U} \text{ and } \mathcal{S}\subseteq\mathcal{Y}.
\end{split} \end{align}
\end{definition}

%% PRACTICAL DEFINITION
In practical applications, the space of private data $\mathcal{U}=\mathbb{R}^{n}$ is Euclidean equipped with the $\ell_{p}$-norm. Assuming the mechanism $Q$ possesses a probability density function $g(u,y)=\mathbb{P}(Qu=y)$, where $g(u,y)$ is almost everywhere differentiable in $u$, the Lipschitz condition \eqref{eqn:lipschitzCondition} translates to a point-wise bound on the derivative across the private input $u$ as follows:
\begin{align}
g(\cdot,y) & \text{ is continuous for all } y\in\mathcal{Y} \text{ and, } \\
\| \nabla g(u,y) \|_{*} & \leq \epsilon g, \text{ for a.e. } u\in\mathcal{U} \text{ and all } y\in\mathcal{Y},
\end{align}
where $\|\cdot\|_{*}$ is the dual norm of $\|\cdot\|$.

%% METRIC IN PLACE OF ADJACENCY RELATION
\subsection{A Metric as Adjacency Relation}
The adjacency relation $\mathcal{A}$ in differential privacy is replaced by the metric $\|\cdot\|$ of the space $\mathcal{U}$ of private data. The composite adjacency relation \eqref{eqn:hybridAdjacency} can be captured using $\ell_{1}$ and $\ell_{2}$-norms. Specifically, assume that the private data $u=[u_{1},\ldots,u_{n}]$ is an aggregation of $n$ individuals' high-dimensional data $u_{i}\in\reals^{m}$. Then, adjacency relation \eqref{eqn:hybridAdjacency} can be relaxed to:
\begin{align} \label{eqn:hybridAdjacency:2}
(u,u')\in\mathcal{A} \Leftrightarrow \sum_{i=1}^{n} \| u_{i} - u_{i}' \|_{2} \leq \alpha.
\end{align}
According to the Lipschitz-privacy framework and assuming existence and differentiability of the density of the mechanism, adjacency relation \eqref{eqn:hybridAdjacency:2} translates into a bound on the derivative of the mechanism:
\begin{align} \label{eqn:hybridAdjacency:3}
\| \nabla_{u_{i}} \ln g(u,y) \|_{2} \leq \epsilon, \: \forall i\in\{1,\ldots,n\}.
\end{align}
Adjacency relation \eqref{eqn:hybridAdjacency:3} can be viewed as an $\ell_{2}$-sensitivity constraint that ensures privacy of high-dimensional data. This constraint is encapsulated in an $\ell_{1}$-sensitivity constraint that respects individuals' participation in the scheme. Additionally, this expression ensures that privacy of individuals' sensitive data remains invariant under rotation transformations on the high-dimensional data $u_{i}$. This invariance is important in many theoretical and practical case such as privacy of the state of dynamical systems and privacy of GPS traces, respectively.

%% EXAMPLES AND CONNECTIONS TO DIFF PRIVACY %%
\subsection{Connections between Lipschitz and Differential Privacy}
The notion of Lipschitz privacy is closely related to that of differential privacy. Particularly, an $\epsilon$-Lipschitz private mechanism is also differential private.
\begin{proposition} \label{thm:lipschitzToDifferentialPrivacy}
For any $\alpha>0$. Then, an $\epsilon$-Lipschitz private mechanism $Q$ is $\alpha\epsilon$-differentially private:
\begin{align}
 \prob(Qu\in\mathcal{S}) \leq e^{\epsilon} \prob(Qu'\in\mathcal{S}), \forall u,u':\: \|u-u'\|\leq\alpha.
\end{align}
\end{proposition}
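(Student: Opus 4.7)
The statement is an essentially immediate consequence of the Lipschitz-privacy definition, so the plan is to unpack that definition and specialize it to pairs $(u,u')$ with $\|u-u'\|\le \alpha$. (I note in passing that the right-hand side of the displayed inequality should read $e^{\alpha\epsilon}$, matching the claim in the text; otherwise the statement is only nontrivial for $\alpha\le 1$.)

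First I would fix an arbitrary measurable set $\mathcal{S}\subseteq \mathcal{Y}$ and an arbitrary pair $u,u'\in\mathcal{U}$ with $\|u-u'\|\le\alpha$. Assuming for the moment that both $\mathbb{P}(Qu\in\mathcal{S})$ and $\mathbb{P}(Qu'\in\mathcal{S})$ are strictly positive, the Lipschitz condition \eqref{eqn:lipschitzCondition} gives
\begin{align}
\ln \mathbb{P}(Qu\in\mathcal{S}) - \ln \mathbb{P}(Qu'\in\mathcal{S}) \leq \epsilon\|u-u'\| \leq \alpha\epsilon.
\end{align}
Exponentiating both sides and using monotonicity of $\exp$ yields $\mathbb{P}(Qu\in\mathcal{S}) \leq e^{\alpha\epsilon}\mathbb{P}(Qu'\in\mathcal{S})$, which is the desired differential-privacy bound.

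The only subtle point is the boundary case in which one of the two probabilities vanishes, since $\ln 0 = -\infty$ makes the finite Lipschitz bound \eqref{eqn:lipschitzCondition} collapse unless it is interpreted carefully. The cleanest way to handle this is to note that if $\mathbb{P}(Qu'\in\mathcal{S})=0$ then the Lipschitz bound forces $\mathbb{P}(Qu\in\mathcal{S})=0$ as well (otherwise the left-hand side of \eqref{eqn:lipschitzCondition} would be $+\infty$), in which case the inequality $\mathbb{P}(Qu\in\mathcal{S})\le e^{\alpha\epsilon}\mathbb{P}(Qu'\in\mathcal{S})$ holds trivially as $0\le 0$; the symmetric case $\mathbb{P}(Qu\in\mathcal{S})=0$ is immediate.

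The entire argument is a one-line computation once the Lipschitz definition is in hand, so I do not expect any real obstacle. If anything, the only item that warrants care is the treatment of the null-probability boundary case, which, as discussed, can be dispatched by observing that the Lipschitz hypothesis already precludes the ``jump from positive to zero'' behavior that would break the exponentiation step.
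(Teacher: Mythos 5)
Your argument is correct and is exactly the intended one: the paper states this proposition without proof precisely because it follows by specializing the Lipschitz bound \eqref{eqn:lipschitzCondition} to $\|u-u'\|\leq\alpha$ and exponentiating, as you do. Your two side observations are also well taken --- the displayed inequality in the proposition should indeed read $e^{\alpha\epsilon}$ to match the claim in the text, and your handling of the zero-probability case is the right way to make the exponentiation step airtight.
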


Many popular differentially private mechanisms, such as the Laplace and the exponential mechanism, are also Lipschitz-differentially private. One exception that fails to satisfy Lipschitz-privacy constraints is the staircase mechanism \cite{geng12}, since the underlying noise distribution is discontinuous. Specifically, the log-probability function $\ln \prob(Qu=y)$ is discontinuous and, hence, is \textit{not} Lipschitz.

\begin{proposition}
Let $s:\mathcal{U}\times\mathcal{Y}\rightarrow\reals$ be $L$-Lipschitz in $\mathcal{U}$. Then, the mechanism $Q$ with density
\begin{align}
\prob(Qu=y|u)\propto e^{\epsilon s(u,y)}
\end{align}
is $\epsilon L$-Lipschitz differentially private. 
\end{proposition}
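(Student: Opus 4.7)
The plan is to verify the Lipschitz-privacy condition directly, by bounding the log-ratio of probabilities assigned to the same measurable set. First I would make the normalization explicit,
\begin{align}
g(u,y) = \frac{e^{\epsilon s(u,y)}}{Z(u)}, \qquad Z(u) = \int_{\mathcal{Y}} e^{\epsilon s(u,y')}\, dy',
\end{align}
so that $\ln g(u,y)=\epsilon s(u,y)-\ln Z(u)$ and, for any measurable $\mathcal{S}\subseteq\mathcal{Y}$,
\begin{align}
\prob(Qu\in\mathcal{S}) = \frac{\int_{\mathcal{S}} e^{\epsilon s(u,y)}\, dy}{Z(u)}.
\end{align}

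Next I would exploit the $L$-Lipschitz hypothesis pointwise in $y$: the inequality $|s(u,y)-s(u',y)|\leq L\|u-u'\|$ translates into the exponential bound $e^{\epsilon s(u,y)}\leq e^{\epsilon L\|u-u'\|}\, e^{\epsilon s(u',y)}$ at each $y$. Integrating over $\mathcal{S}$ upper-bounds the numerator of $\prob(Qu\in\mathcal{S})$, and integrating over all of $\mathcal{Y}$ yields $Z(u)\leq e^{\epsilon L\|u-u'\|}Z(u')$, lower-bounding the denominator. Combining the two, taking logarithms, and invoking symmetry in $(u,u')$ produces an inequality of the form $|\ln\prob(Qu\in\mathcal{S}) - \ln\prob(Qu'\in\mathcal{S})|\leq C\epsilon L\|u-u'\|$.

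The main obstacle is arriving at the exact constant $C=1$ stated in the proposition: bounding the score term and the log-normalizer separately gives $C=2$, since their worst-case deviations add rather than cancel. To sharpen the constant, I would try to exploit the identity
\begin{align}
\nabla_u\ln Z(u) = \epsilon\,\expe_{Y\sim g(u,\cdot)}\!\bigl[\nabla_u s(u,Y)\bigr],
\end{align}
and rewrite the log-ratio as an integral along the segment from $u'$ to $u$, hoping the pointwise score gradient and its posterior average cancel enough to recover the missing factor of two. Failing that, the strongest conclusion provable in full generality is $2\epsilon L$-Lipschitz privacy, matching the classical exponential-mechanism bound of McSherry--Talwar; the stated $\epsilon L$ would then correspond to an additional structural assumption (e.g.\ the normalizer being $u$-independent, as occurs for the Laplace mechanism with $s(u,y)=-\|u-y\|$).
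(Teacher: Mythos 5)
The paper states this proposition without an accompanying proof, so there is no in-paper argument to compare against; your decomposition into the score term and the log-normalizer is the standard (McSherry--Talwar) route, and both halves are carried out correctly. More importantly, your diagnosis of the constant is right, and the cancellation you hope to extract from the identity $\nabla_u\ln Z(u)=\epsilon\,\expe_{Y\sim g(u,\cdot)}\bigl[\nabla_u s(u,Y)\bigr]$ provably does not occur in general. That identity gives
\begin{align}
\nabla_u \ln g(u,y) = \epsilon\Bigl( \nabla_u s(u,y) - \expe_{Y\sim g(u,\cdot)}\bigl[\nabla_u s(u,Y)\bigr] \Bigr),
\end{align}
i.e.\ $\epsilon$ times the deviation of the score gradient at $y$ from its posterior mean; since $\nabla_u s(u,\cdot)$ ranges over a ball of radius $L$, this deviation has diameter $2L$, and $2\epsilon L$ is tight. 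Concretely, take $\mathcal{U}=\reals$, a finite outcome set $\{y_1,\dots,y_k\}$, and scores $s(t,y_1)=L(1-t)$, $s(t,y_j)=Lt$ for $j\geq2$ on $[0,1]$ (clipped outside so each is $L$-Lipschitz). For $u=0$, $u'=1$, $\mathcal{S}=\{y_1\}$ one computes
\begin{align}
\frac{\prob(Qu=y_1)}{\prob(Qu'=y_1)} = e^{\epsilon L}\cdot\frac{1+(k-1)e^{\epsilon L}}{e^{\epsilon L}+k-1} \xrightarrow{k\rightarrow\infty} e^{2\epsilon L},
\end{align}
which already exceeds $e^{\epsilon L\,|u-u'|}$ for every $k\geq3$. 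Hence no proof of the stated $\epsilon L$ constant can exist without further hypotheses.

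What rescues the proposition in the only instance the paper actually uses is exactly the structural assumption you name at the end: for $\mathcal{U}=\mathcal{Y}=\reals^{n}$ and $s(u,y)=-\|u-y\|_{p}$ the normalizer $Z(u)=\int e^{-\epsilon\|u-y\|_{p}}dy$ is independent of $u$ by translation invariance, the log-normalizer term vanishes, and the constant $\epsilon L=\epsilon$ is exact. Your proposal should therefore be read as a correct proof of the $2\epsilon L$ statement in full generality, together with a correct proof of the $\epsilon L$ statement under translation invariance; the proposition as printed is missing either the factor of $2$ or that hypothesis.
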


In the special case where $\mathcal{U}=\mathcal{Y}=\mathbb{R}^{n}$ and $s(u,y) = - \| u - y \|_{p}$, we recover the Laplace mechanism. Furthermore, Lipschitz privacy inherits the property of resiliency to post-processing. Identically to differential privacy, any further, possibly randomized, post-processing of the output carries the same privacy guarantees.
\begin{proposition}[Post-processing] \label{thm:lipschitzPostprocessing}
Consider an $\epsilon$-Lipschitz differentially private mechanism $Q:\mathcal{U} \rightarrow \Delta\left( \mathcal{Y} \right)$ and a post-processing of the output $f:\mathcal{Y} \rightarrow \mathcal{Z}$. Then, the mechanism $f\circ Q$ is $\epsilon$-Lipschitz differentially private.
\end{proposition}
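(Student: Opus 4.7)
The plan is to reduce to the deterministic case and then handle auxiliary randomness by Fubini. First, I would rewrite the Lipschitz condition \eqref{eqn:lipschitzCondition} in the equivalent multiplicative form
\begin{align}
e^{-\epsilon\|u-u'\|}\,\mathbb{P}(Qu'\in\mathcal{S}) \;\leq\; \mathbb{P}(Qu\in\mathcal{S}) \;\leq\; e^{\epsilon\|u-u'\|}\,\mathbb{P}(Qu'\in\mathcal{S}),
\end{align}
valid for every measurable $\mathcal{S}\subseteq\mathcal{Y}$ and every $u,u'\in\mathcal{U}$. This reformulation is what makes the property stable under integration.

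Next, I would treat the deterministic case. For a measurable map $f:\mathcal{Y}\to\mathcal{Z}$ and any measurable $\mathcal{T}\subseteq\mathcal{Z}$, set $\mathcal{S}=f^{-1}(\mathcal{T})\subseteq\mathcal{Y}$, which is measurable by hypothesis. Since $\mathbb{P}((f\circ Q)u\in\mathcal{T})=\mathbb{P}(Qu\in\mathcal{S})$, the Lipschitz condition for $Q$ applied to $\mathcal{S}$ transfers immediately to $f\circ Q$ with the same constant $\epsilon$, and taking the logarithm recovers \eqref{eqn:lipschitzCondition} for the composed mechanism.

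For a possibly randomized $f$, I would represent it as $f(y)=\tilde f(y,R)$, where $R$ is an auxiliary random variable independent of $Q$ and $\tilde f$ is a deterministic measurable map; this is the standard representation of randomized kernels. Conditional on $R=r$, the map $\tilde f(\cdot,r)$ is deterministic, so by the previous step
\begin{align}
\mathbb{P}\bigl(\tilde f(Qu,R)\in\mathcal{T}\,\big|\,R=r\bigr) \leq e^{\epsilon\|u-u'\|}\,\mathbb{P}\bigl(\tilde f(Qu',R)\in\mathcal{T}\,\big|\,R=r\bigr).
\end{align}
Taking expectation over $R$ and using Fubini (justified by independence of $R$ from $Q$ and by the uniform multiplicative bound) yields
\begin{align}
\mathbb{P}((f\circ Q)u\in\mathcal{T}) \leq e^{\epsilon\|u-u'\|}\,\mathbb{P}((f\circ Q)u'\in\mathcal{T}),
\end{align}
and the reverse inequality is identical by symmetry. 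Taking logarithms reproduces \eqref{eqn:lipschitzCondition} for $f\circ Q$.

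The main obstacle is not algebraic but measure-theoretic: ensuring the randomized post-processing admits a representation as a deterministic function of $Qu$ and independent auxiliary randomness (standard for Polish response spaces, but worth stating), and justifying the exchange of the conditional probability with the expectation over $R$. Once the multiplicative form of Lipschitz privacy is in hand, the rest is a direct transcription of the classical differential-privacy post-processing argument.
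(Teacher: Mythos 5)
Your argument is correct: the paper itself states Proposition \ref{thm:lipschitzPostprocessing} without proof, and what you give is the standard post-processing argument --- pass to the multiplicative form of \eqref{eqn:lipschitzCondition}, handle deterministic $f$ via $\mathbb{P}((f\circ Q)u\in\mathcal{T})=\mathbb{P}(Qu\in f^{-1}(\mathcal{T}))$, and average over independent auxiliary randomness for the randomized case --- which is evidently what the authors intend, since the surrounding text explicitly claims the ``possibly randomized'' case. The only point worth making explicit is that $f$ must be assumed measurable (so that $f^{-1}(\mathcal{T})$ lies in the $\sigma$-algebra on $\mathcal{Y}$), which you already flag; beyond that there is no gap.
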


%% ADVANTAGES OF INF PRIVACY %%
Propositions \ref{thm:lipschitzToDifferentialPrivacy}-\ref{thm:lipschitzPostprocessing} establish that Lipschitz-differential privacy is a stricter version of differential privacy. Lipschitz privacy has some benefits over the original framework. Firstly, the privacy constraint is simplified; the adjacency relation is now captured by the metric of the space of private data. Furthermore, Lipschitz-differential privacy enables the use of calculus tools in designing and proving properties of mechanisms. Additionally, it provides a unified privacy framework that can support richer privacy-aware applications. Privacy is now viewed as a sensitivity constraint on the mapping between private inputs and published outputs.

\section{Optimal Private Mechanisms} \label{sec:optimalityResults}
In this section, the optimality of the Laplace mechanism is proven. Specifically, we prove that the Laplace mechanism minimizes the mean-squared error among all private mechanisms that use additive and input-independent noise. Initially, the result is derived for the case of a single-dimensional identity query. Next, the result is extended to the case of isotropic multi-dimensional queries under both $\ell_{1}$ and $\ell_{2}$ norms. The $\ell_{1}$-norm respects individuals' participation in the aggregation scheme and is related to event counting queries \cite{li10}. Moreover, the $\ell_{2}$-norm is invariant under rotations and is more suitable for high-dimensional private data such as GPS signals and power consumption traces. Finally, the optimal mechanism for the case of multiple individuals contributing their high-dimensional sensitive data is derived from the results for $\ell_{1}$ and $\ell_{2}$ norms.

%% SISO IDENTITY CASE %%
\subsection{Single-Dimensional Identity Query}
The exponential mechanism introduced in \cite{mcsherry07} is a general way of building privacy-preserving mechanisms. Besides the exponential mechanism, specific mechanisms that approximate linear, high-dimensional queries were explored in \cite{li10}. However, no optimality guarantees were provided. Under the original framework of differential privacy the staircase mechanism \cite{geng12} is optimal for one-dimensional identity queries in the sense of mean-squared error. Asymptotic bounds on the sub-optimality of mechanisms approximating linear queries were introduced \cite{hardt10}. In this approach, we are interested in exact optimality results. Specifically, we provide a proof of the optimality of the Laplace mechanism for the case of single-dimensional identity queries. In \cite{wang14}, the Laplace mechanism is proven to be an entropy-minimizer. In this work, we provide a proof that the Laplace mechanism achieves the minimal mean-squared error. In the following subsections, this result is extended to high-dimensional cases.

Initially, we focus on single-dimensional private data and Lipschitz-private mechanisms that add oblivious noise. In this setting, the mean-squared error is minimized when the noise is Laplace-distributed. The problem of designing the optimal private mechanism is initially posed as an infinite-dimensional linear program. Optimality of the Laplace distribution is proven by deriving the dual problem and constructing a dual feasible solution. In particular, the space of private data is the real line $\mathcal{U}=\reals$ equipped with the absolute value as a metric. We approximate the identity query $q(u)=u$ with an $\epsilon$-Lipschitz private mechanism $Q$ that adds input-independent noise with probability measure $g$:
\begin{align}
Qu = u + V, \; \text{where} \: V\sim  g \in \Delta \left( \reals \right),
\end{align}
where $\Delta(\mathcal{Y})$ denotes the set of probability measures over the set $\mathcal{Y}$. The following result establishes the optimality of Laplace distribution.
\begin{theorem} \label{thm:laplaceOptimality1D}
Consider the set of $\epsilon$-Lipschitz private mechanisms $Q:\mathbb{R}\rightarrow \Delta\left( \mathbb{R} \right)$, $Qu = u + V$, that approximate the identity query $q:\reals\rightarrow\reals$, $q(u)=u$, where noise $V$ is input-independent and has probability distribution $g$. The Laplace mechanism that adds noise with density $l(v) = \frac{\epsilon}{2} e^{-\epsilon |v|}$ achieves the minimal mean-squared error:
\begin{align}
\mathbb{E}\left( Qu - q(u) \right)^{2} = \underset{V\sim g}{ \mathbb{E} }V^{2} \geq \underset{V\sim l}{\mathbb{E}}V^{2} = \frac{2}{\epsilon^{2}}.
\end{align}
\end{theorem}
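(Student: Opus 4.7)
The plan is to pose the problem as an infinite-dimensional linear program and prove optimality by exhibiting a dual feasible solution of value $2/\epsilon^{2}$. Since $V$ is input-independent, $\prob(Qu\in\mathcal{S})=\int_{\mathcal{S}} g(y-u)\,dy$, so the Lipschitz privacy condition on $Q$ reduces to $|\ln g(v)-\ln g(v')|\le\epsilon|v-v'|$, and for an a.e.\ differentiable density (which Lipschitz privacy essentially forces) to the pointwise bounds $\epsilon g(v)\pm g'(v)\geq 0$ a.e. The MSE minimization is then the infinite-dimensional linear program
\begin{align}
\min_{g\geq 0}\ \int v^{2}g(v)\,dv \quad \text{s.t.}\quad \int g = 1,\ \epsilon g \pm g' \geq 0 \text{ a.e.}
\end{align}

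Next I would form the Lagrangian with a scalar multiplier $\lambda$ for the normalization and nonnegative multiplier functions $\mu(v),\nu(v)$ for the two sign-flipped Lipschitz constraints, integrate by parts to move each $g'$ onto its multiplier, and collect the $g(v)$ terms. Demanding that the inner infimum over $g\geq 0$ be finite forces the coefficient of $g(v)$ to be a.e.\ nonnegative, yielding the dual
\begin{align}
\sup(-\lambda)\quad \text{s.t.}\quad v^{2}+\lambda - \epsilon(\mu+\nu) - \mu' + \nu' \geq 0,\ \mu,\nu\geq 0.
\end{align}
The boundary terms in the integration by parts vanish because any Lipschitz-private $g$ with finite variance satisfies $v^{2}g(v)\to 0$ at $\pm\infty$, via a short tail argument using $g(v)\leq\epsilon\int_{v}^{\infty} g(s)\,ds$.

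The heart of the proof, and the main obstacle, is exhibiting a dual feasible triple attaining $-\lambda = 2/\epsilon^{2}$. To guess its form I would read off complementary slackness at the conjectured Laplace optimum $l(v)=\frac{\epsilon}{2}e^{-\epsilon|v|}$: the constraint $\epsilon l + l' \geq 0$ is tight on $\{v>0\}$ and $\epsilon l - l' \geq 0$ on $\{v<0\}$, which suggests $\nu$ supported on $[0,\infty)$ and $\mu$ on $(-\infty,0]$. A quadratic ansatz then leads to
\begin{align}
\nu(v)=\Bigl(\tfrac{v^{2}}{\epsilon}+\tfrac{2v}{\epsilon^{2}}\Bigr)\mathbf{1}[v\geq 0],\ \mu(v)=\Bigl(\tfrac{v^{2}}{\epsilon}-\tfrac{2v}{\epsilon^{2}}\Bigr)\mathbf{1}[v\leq 0],\ \lambda=-\tfrac{2}{\epsilon^{2}},
\end{align}
which I would verify is nonnegative and satisfies $\epsilon(\mu+\nu)+\mu'-\nu'\equiv v^{2}-2/\epsilon^{2}$ by direct computation. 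Weak duality then gives $\expe V^{2}\geq 2/\epsilon^{2}$ for every feasible $g$, and this bound is attained by the Laplace density. The hard step is guessing the right multipliers; once the ansatz is in hand, the remaining verifications are mechanical.
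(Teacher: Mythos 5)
Your proposal is correct, and it reaches the conclusion by a genuinely different route at the decisive step. The paper also sets up the infinite-dimensional LP with constraint $|g'|\leq\epsilon g$ and dualizes, but it works with a single signed multiplier $\eta=\mu-\kappa$ subject to the boundary conditions $\lim_{v\to\infty}\eta(v)\geq 0$, $\lim_{v\to-\infty}\eta(v)\leq 0$; under those conditions the exact dual optimum is \emph{not attained} (the paper explicitly notes that $\eta(v)=-\tfrac{1}{\epsilon^{2}}v(\epsilon|v|+2)$ is infeasible at $\lambda^{*}$), so it constructs, for each $\lambda<\lambda^{*}$, a feasible $\eta$ by solving the initial value problem $\eta'+\epsilon|\eta|=v^{2}-\lambda$, $\eta(0)=0$, and verifies feasibility by a sign/monotonicity analysis before letting $\lambda\uparrow\lambda^{*}$. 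Your multipliers $\mu,\nu$ are, up to sign, exactly that ``infeasible'' $\eta$ split onto the two half-lines; what makes your certificate work where the paper's fails is that you replace the boundary condition on $\eta$ alone by the weaker requirement that the boundary terms $\nu(v)g(v)$, $\mu(v)g(v)$ vanish, which holds for \emph{every} feasible $g$ with finite second moment via your tail bound $v^{2}g(v)\leq\epsilon\int_{v}^{\infty}s^{2}g(s)\,ds\to 0$ (and trivially the theorem holds when the second moment is infinite). This buys you an exact, closed-form dual optimal certificate and a one-line weak-duality computation $\int v^{2}g-\tfrac{2}{\epsilon^{2}}=\int_{0}^{\infty}\nu(\epsilon g+g')+\int_{-\infty}^{0}\mu(\epsilon g-g')\geq 0$, avoiding both the ODE analysis and the limiting argument. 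The one place you are no more rigorous than the paper's sketch is the reduction from the measure-theoretic privacy constraint to the pointwise bound $|g'|\leq\epsilon g$ for an absolutely continuous density; the paper devotes Lemma \ref{lem:densityExists} and a discretization argument to this, and your parenthetical ``which Lipschitz privacy essentially forces'' should be expanded along the same lines (the CDF is $\epsilon$-Lipschitz by taking $\mathcal{S}=(-\infty,0]$, and the same argument on intervals gives absolute continuity of $g$ itself).
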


\begin{proof}
A simplified but intuitive sketch of the proof is presented here. A full proof is presented in the Appendix. By definition, the optimal mechanism is the solution of the following optimization problem:
\begin{align} \begin{split} \label{eqn:laplaceOptimality1D:1}
\underset{g\in\Delta(\reals)}{\text{minimize}} \quad		& \underset{V\sim g}{\expe} V^{2} \\
\text{s.t.} \quad									& \text{$Q$ is $\epsilon$-Lipschitz private.}
\end{split} \end{align}
The optimization is assumed over the infinite-dimensional space of probability measures over the real line. For a simplified proof, we restrict our attention to probability measures that are continuous and almost everywhere differentiable. This assumption is removed in the technical proof. The privacy constraint is massaged:
\begin{align} \begin{split}
\text{$Q$ is } & \epsilon\text{-Lipschitz private } \Rightarrow \\ 
\left| \frac{d}{du} \ln \prob( Qu=y ) \right| & \leq \epsilon, \: \forall u,y \Leftrightarrow \\
\left| \frac{d}{du} \prob( V = y-u ) \right|  & \leq \epsilon \prob( V = y-u ), \: \forall u,y \Leftrightarrow \\
\left| g'(v) \right| & \leq \epsilon g(v), \: \forall v.
\end{split} \end{align}
Specifically, $g$ should be continuous and $g'$ should exist almost everywhere. Problem \eqref{eqn:laplaceOptimality1D:1} can, then, be restated as a linear program:
\begin{align} \begin{split} \label{eqn:laplaceOptimality1D:2}
\underset{g: AC(\reals\rightarrow\reals_{+}) }{\text{minimize}} \quad	& \int_{\reals} v^{2} g(v) dv \\
\text{s.t.} \quad													& \int_{\reals} g(v) dv = 1, \\
																		& -\epsilon g(v) \leq g'(v) \leq \epsilon g(v), \; \forall v\in\reals,
\end{split} \end{align}
where $AC$ denotes the set of absolutely continuous functions. \todo{Check the relation between a.e. diff. and abs. cont.} Problem \eqref{eqn:laplaceOptimality1D:2} is an infinite-dimensional linear program with uncountably many constraints. We assign the dual variables $\lambda\in\reals$ and $\kappa,\mu:\reals\rightarrow\reals_{+}$ for the two constraints, respectively. The dual of Problem \eqref{eqn:laplaceOptimality1D:2} is:
\begin{align} \begin{split} \label{eqn:laplaceOptimality1D:3}
\underset{\lambda\in\mathbb{R}, \eta\in\mathcal{C}^{1}\left( \reals\rightarrow\reals \right)}{\text{maximize}} \quad & \lambda \\
\text{s.t.} \quad				& \eta'(v) + \epsilon |\eta(v)| \leq v^{2} - \lambda, \: \forall v\in\mathbb{R}, \\
								& \lim_{v\rightarrow\infty} \eta(v) \geq 0, \quad \lim_{v\rightarrow-\infty} \eta(v) \leq 0.
\end{split} \end{align}
Once both primal Problem \eqref{eqn:laplaceOptimality1D:2} and dual Problem \eqref{eqn:laplaceOptimality1D:3} are stated, we construct primal and dual feasible solutions, summon weak duality, and establish optimality. The Laplace distribution $g(v) = \frac{\epsilon}{2} e^{-\epsilon |v|}$ is a primal feasible solution for Problem \eqref{eqn:laplaceOptimality1D:2} with cost $\frac{2}{\epsilon^{2}}$. Moreover, we construct a dual feasible solution for Problem \eqref{eqn:laplaceOptimality1D:3} with cost arbitrarily close to $\lambda^{*}=\frac{2}{\epsilon^{2}}$. Specifically, for any $\lambda < \lambda^{*}$, we are able to construct a dual feasible solution $\left( \lambda, \eta \right)$ that satisfies the initial value problem:
\begin{align} \label{eqn:laplaceOptimality1D:4}
\eta(0) = 0  \text{ and } \eta'(v) + \epsilon |\eta(v)| = v^{2} - \lambda, \; \forall v\in\reals\backslash\{0\}.
\end{align}
Figure \ref{laplacianOptimalityPlot} plots the unique solution $\eta:\reals\rightarrow\reals$ of the initial value problem \eqref{eqn:laplaceOptimality1D:4} for different values of $\lambda$. For $\lambda<\lambda^{*}$, the unique solution $\eta$ of the initial value problem \eqref{eqn:laplaceOptimality1D:4} is feasible since it satisfies the boundary constraints:
\begin{align}
\lim_{v\rightarrow\infty} \eta(v) \geq 0, \quad \lim_{v\rightarrow-\infty} \eta(v) \leq 0.
\end{align}
On the contrary, the dual variable $\eta$ is infeasible for $\lambda\geq\lambda^{*}$. Weak duality establishes the optimality of the Laplace mechanism. Surprisingly, the dual solution $\eta(v) = -\frac{1}{\epsilon^{2}} v (\epsilon|v|+2)$ for the optimal value $\lambda^{*}$ is infeasible. The infinite dimensionality of the problem leads to an open set of feasible solutions for problem \eqref{eqn:laplaceOptimality1D:3} and generates this paradox.

\begin{figure} \begin{center}
\includegraphics[width=.9\linewidth]{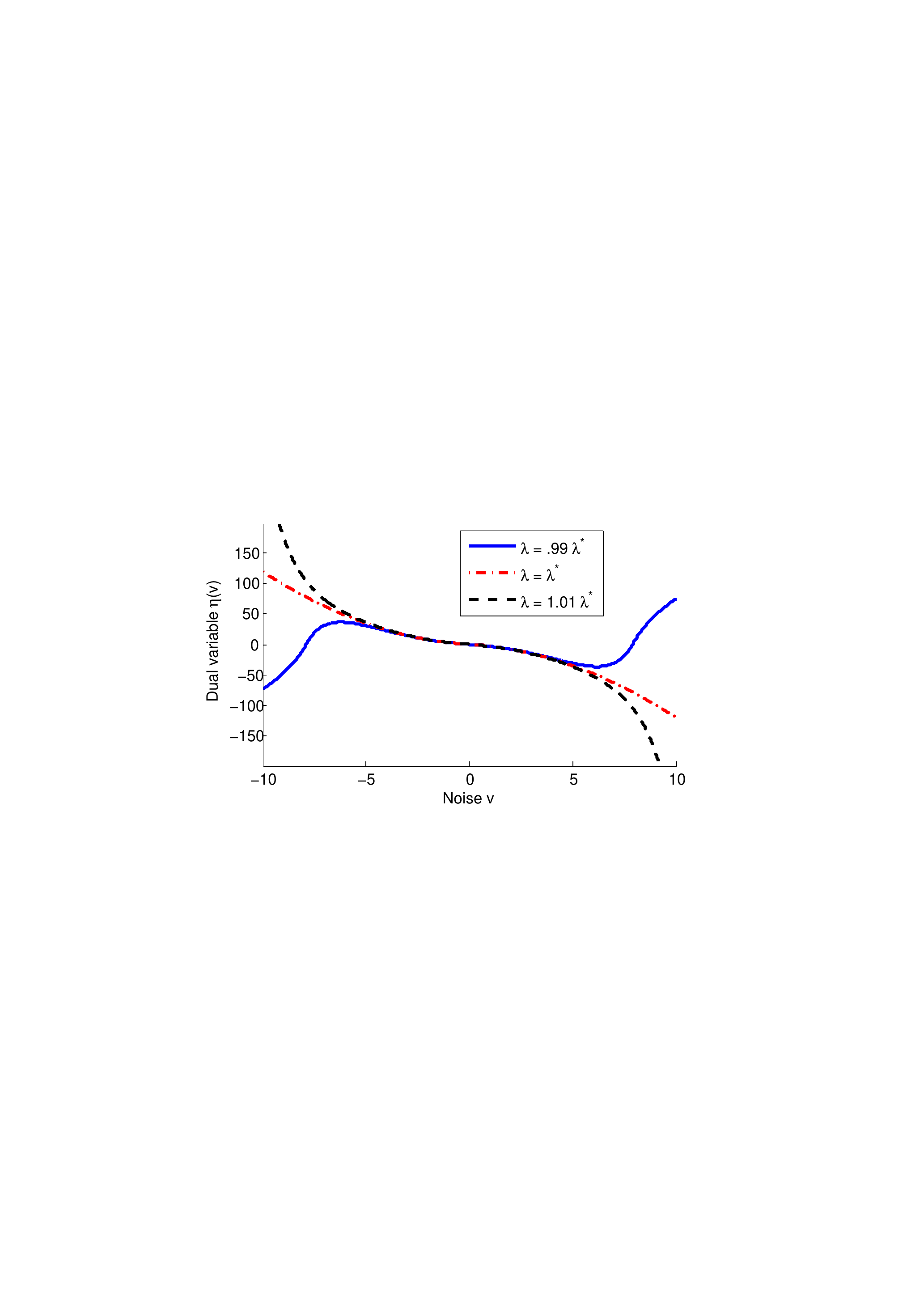}
\caption{The dual variable $\eta(v)$ is the solution to the intial value problem $\eta'(v) + \epsilon |\eta(v)| = v^{2} - \lambda$, $\eta(0)=0$ for different values of $\lambda$. A feasible solution needs to satisfy the boundary constraint $\lim_{v\rightarrow\infty} \eta(v) \geq 0$. For $\lambda<\lambda^{*}$, the solution $\eta$ is feasible.} \label{laplacianOptimalityPlot}
\end{center} \end{figure}
\end{proof}

The staircase mechanism \cite{geng12} can be viewed as an approximation of the Laplace mechanism. Although it features better mean-squared error than the Laplace mechanism, the staircase mechanism is \textit{not} $\epsilon$-Lipschitz private for any finite value of $\epsilon$. Thus, the staircase mechanism is not a feasible solution to Problem \eqref{eqn:laplaceOptimality1D:1}.

\subsection{High-Dimensional Identity Query under $\ell_{1}$-norm}
Differential privacy is mainly targeted for schemes where individuals contribute their personal data to a single database. In such schemes, the sensitive data $u$ contains each individual's private data $u_{i}$ at coordinate $i$. Here, we extend the previous results to high-dimensional identity queries. Privacy-aware approximation of identity queries can be interpreted as synthetic databases which are post-processed to answer any subsequent query. More formally, let the space of sensitive data be the real space $\mathcal{U}=\reals^{n}$ equipped with the $\ell_{1}$-norm. We focus on the case of identity queries $q:\reals^{n} \rightarrow \reals^{n}$ with $q(u)=u$. A generalized version of Theorem \ref{thm:laplaceOptimality1D} establishes optimality of the Laplace mechanism:

\begin{theorem} \label{thm:NDL1OptimalMechanism}
Consider the $\epsilon$-Lipschitz private (with respect to the $\ell_{1}$-norm) mechanism $Q:\reals^{n}\rightarrow \Delta\left( \reals^{n} \right)$ of the form $Qu = u + V$, with $V\sim g(V) \in \Delta \left( \reals^{n} \right)$. Then, the Laplace mechanism that adds oblivious noise with density $g = l^{n}_{1}(v) = \left(\frac{\epsilon}{2}\right)^{n} e^{-\epsilon \|v\|_{1}}$ minimizes mean-squared error:
\begin{align}
\underset{V\sim g}{\expe} \|V\|^{2} \geq \underset{ V \sim l^{n}_{1} }{ \expe } \|V\|_{2}^{2} = \frac{2n}{\epsilon^{2}}.
\end{align}
\end{theorem}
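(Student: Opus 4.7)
The plan is to reduce the $n$-dimensional optimization to $n$ independent one-dimensional instances of Theorem \ref{thm:laplaceOptimality1D} by decomposing the objective and the Lipschitz constraint coordinate-wise. Both the objective $\mathbb{E}\|V\|_2^2 = \sum_{i=1}^n \mathbb{E}V_i^2$ and the product-Laplace candidate $l_1^n(v)=\prod_{i=1}^n \tfrac{\epsilon}{2}e^{-\epsilon|v_i|}$ are clearly separable, so the real content is in decomposing the privacy constraint.

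First I would rephrase the $\ell_1$-Lipschitz privacy of the additive mechanism $Qu=u+V$ with density $g$ as a point-wise bound on the gradient of $\ln g$. Since the dual of $\|\cdot\|_1$ on $\reals^n$ is $\|\cdot\|_\infty$, the Lipschitz condition \eqref{eqn:lipschitzCondition} becomes $\|\nabla \ln g(v)\|_\infty \leq \epsilon$, equivalently
\begin{align}
\left| \tfrac{\partial}{\partial v_i} g(v) \right| \leq \epsilon\, g(v), \qquad i=1,\dots,n,
\end{align}
for a.e.\ $v\in\reals^n$. Next I would introduce the $i$-th marginal density $g_i(v_i) = \int_{\reals^{n-1}} g(v)\,\prod_{j\ne i} dv_j$ and show it inherits the 1D Lipschitz condition used in Theorem \ref{thm:laplaceOptimality1D}: differentiating under the integral (justified by the coordinate-wise gradient bound above and Fubini),
\begin{align}
|g_i'(v_i)| \leq \int_{\reals^{n-1}} \left| \tfrac{\partial g}{\partial v_i}(v)\right| \prod_{j\ne i} dv_j \leq \epsilon \int_{\reals^{n-1}} g(v) \prod_{j\ne i} dv_j = \epsilon\, g_i(v_i).
\end{align}
Thus the one-dimensional additive mechanism $Q_iu_i=u_i+V_i$ with $V_i\sim g_i$ is $\epsilon$-Lipschitz private.

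Applying Theorem \ref{thm:laplaceOptimality1D} coordinate-wise then gives $\mathbb{E}V_i^2 \geq 2/\epsilon^2$, and summing yields $\mathbb{E}\|V\|_2^2 \geq 2n/\epsilon^2$. Attainment is immediate: the product Laplace $l_1^n$ is a valid density, its log-gradient has $\ell_\infty$-norm exactly $\epsilon$ (so it is $\epsilon$-Lipschitz private under $\|\cdot\|_1$), and a direct calculation gives $\mathbb{E}_{V\sim l_1^n}\|V\|_2^2 = n\cdot\tfrac{2}{\epsilon^2} = \tfrac{2n}{\epsilon^2}$, matching the lower bound.

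The main obstacle is the regularity step: the differentiation-under-the-integral argument requires $g$ to be absolutely continuous in each coordinate (which is exactly the $n$-dimensional analogue of the technicality handled in the appendix for the 1D proof). I would dispatch it either by first establishing the result on the dense subclass of smooth, compactly supported densities and then extending by a mollification/approximation argument, or by invoking the fact that the coordinate-wise gradient bound $|\partial_i g|\leq \epsilon g$ already forces absolute continuity of $g$ along almost every line parallel to the $i$-th axis, which by Fubini transfers to absolute continuity of $g_i$—so the 1D hypothesis of Theorem \ref{thm:laplaceOptimality1D} applies without further fuss.
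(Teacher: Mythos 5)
Your proof is correct, but it takes a genuinely different route from the paper. The paper attacks the $n$-dimensional problem head-on: it writes the LP with constraint $\|\nabla g(v)\|_{\infty}\leq\epsilon g(v)$, derives the $n$-dimensional dual \eqref{eqn:NDL1OptimalMechanism:2}, and lifts the one-dimensional \emph{dual certificate} to $n$ dimensions via $\eta_{i}(v)=\eta_{1\text{D}}(v_{i})$, $\lambda=n\lambda_{1\text{D}}$, closing with weak duality. You instead push the $n$-dimensional \emph{primal} down to $n$ one-dimensional problems by marginalization, so that Theorem \ref{thm:laplaceOptimality1D} does all the work and no new dual needs to be constructed. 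Your approach is more elementary and self-contained; the paper's approach is the one that survives the passage to the $\ell_{2}$ case (Theorem \ref{thm:NDL2OptimalMechanism}), where the constraint $\|\nabla g\|_{2}\leq\epsilon g$ does not decouple coordinate-wise and no marginalization argument is available. One improvement to your own argument: the regularity obstacle you flag (differentiating the marginal under the integral sign) can be bypassed entirely. Apply the Lipschitz-privacy definition \eqref{eqn:lipschitzCondition} to cylinder sets $\mathcal{S}\times\reals^{n-1}$ and to inputs $a e_{i}$, $b e_{i}$, for which $\|a e_{i}-b e_{i}\|_{1}=|a-b|$; this shows directly that the marginal mechanism $u_{i}\mapsto u_{i}+V_{i}$ is $\epsilon$-Lipschitz private at the level of probability measures, so Theorem \ref{thm:laplaceOptimality1D} (whose appendix proof, via Lemma \ref{lem:densityExists}, already handles measures without assumed smoothness) applies as a black box, and no mollification or Fubini argument is needed.
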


\begin{proof}
Similarly to the proof of Theorem \ref{thm:laplaceOptimality1D}, the optimal mechanism is the solution of the following optimization problem:
\begin{align} \begin{split} \label{eqn:NDL1OptimalMechanism:1}
\underset{ g: AC(\reals^{n}\rightarrow \reals_{+}) }{\text{minimize}} \quad	& \int_{\reals^{n}} g(v) v^{T}v dv \\
\text{s.t.} \quad															& \int_{\reals^{n}} g(v) dv = 1, \\
																					& \| \nabla g(v) \|_{\infty} \leq \epsilon g(v), \; \forall v\in\reals^{n}.	
\end{split} \end{align}
The last constraint is equivalent to
\begin{align} \label{eqn:NDL1OptimalMechanism:3}
-\epsilon g(v) \leq \frac{\partial g}{\partial v_{i}} \leq \epsilon g(v), \; \forall v\in\reals^{n}, \; \forall i\in\{1,\ldots,n\}.
\end{align}
We consider the dual variables $\lambda\in\reals$ and $\kappa_{i},\mu_{i}:\reals^{n}\rightarrow\reals_{+}$, set $\eta_{i}(v) = \mu_{i}(v) - \kappa_{i}(v)$, and derive the dual problem:
\begin{align} \begin{split} \label{eqn:NDL1OptimalMechanism:2}
\underset{ \lambda\in\reals, \eta_{i}\in\mathcal{C}^{1}\left( \reals^{n}\rightarrow\reals \right) }{\text{maximize}} \quad		& \lambda	\\
\text{s.t.} \quad										& \sum_{i=1}^{n} \left\{ \frac{\partial \eta_{i}}{\partial v_{i}} + \epsilon |\eta_{i}(v)| \right\} \leq \sum_{i=1}^{n} v_{i}^{2} - \lambda, \\
																& \lim_{v_{i}\rightarrow\infty} \eta_{i}(v) \geq 0, \; \lim_{v_{i}\rightarrow-\infty} \eta_{i}(v) \leq 0 , \; \forall i.
\end{split} \end{align}
The solution $g(v) = \left(\frac{\epsilon}{2}\right)^{n} e^{-\epsilon \|v\|_{1}}$ is feasible for the primal Problem \eqref{eqn:NDL1OptimalMechanism:1} and features cost $\frac{2n}{\epsilon^{2}}$. A feasible solution for the dual Problem \eqref{eqn:NDL1OptimalMechanism:2} is defined as:
\begin{align}
\eta_{i}(v) = \eta_{1\text{D}}(v_{i}), \quad \lambda = n \lambda_{1\text{D}},
\end{align}
where $(\lambda_{1\text{D}}, \eta_{1\text{D}})$ is a feasible dual solution for the single-dimensional case given by the initial value problem \eqref{eqn:laplaceOptimality1D:4}. Therefore, the dual Problem \eqref{eqn:NDL1OptimalMechanism:2} admits a feasible solution with cost arbitrarily close to $\frac{2n}{\epsilon^{2}}$. Weak duality establishes the optimality of the Laplace mechanism.
\end{proof}

\subsection{High-Dimension Identity Query under $\ell_{2}$-norm}
Differential privacy with respect to the $\ell_{1}$-norm captures privacy against the participation of individual users. The $\ell_{2}$-norm is a more suitable for users that contribute high-dimensional data such as GPS and power consumption traces. Once again, a version of the Laplace mechanism is proven to achieve minimum mean-squared-error among all $\epsilon$-Lipschitz private mechanisms that approximate the identity query by adding oblivious noise:

\begin{theorem} \label{thm:NDL2OptimalMechanism}
Consider the $\epsilon$-Lipschitz private (with respect to the $\ell_{2}$-norm) mechanism $Q:\reals^{n}\rightarrow \Delta\left(\reals^{n}\right)$ of the form $Qu = u + V$, with $V\sim g \in \Delta \left( \reals^{n} \right)$. Then, the Laplace mechanism that adds noise $V$ with density $g = l^{n}_{2}(v) \propto e^{-\epsilon \|v\|_{2}}$ minimizes the mean-squared error:
\begin{align}
\underset{V\sim g}{\expe}\|V\|^{2} \geq \underset{ V\sim l^{n}_{2} }{\expe}\|V\|_{2}^{2} = \frac{n(n+1)}{\epsilon^{2}}.
\end{align}
\end{theorem}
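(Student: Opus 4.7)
The plan is to follow the LP-duality template of Theorems \ref{thm:laplaceOptimality1D} and \ref{thm:NDL1OptimalMechanism}, with one preliminary move. The $\ell_{2}$-gradient constraint $\|\nabla g(v)\|_{2} \leq \epsilon g(v)$ does not decouple across coordinates the way the $\ell_{\infty}$-bound \eqref{eqn:NDL1OptimalMechanism:3} did, so the coordinate-wise reduction used for the $\ell_{1}$-norm is unavailable. Instead, I would exploit the joint rotational invariance of the privacy constraint, of the cost $\expe\|V\|_{2}^{2}$, and of the Laplace candidate $l_{2}^{n}$ to symmetrize first and then reduce to a one-dimensional radial LP.

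First, I would symmetrize: given any feasible $g$, define $\bar{g}(v) = \int_{SO(n)} g(Rv)\, dR$ with respect to the normalized Haar measure on $SO(n)$. Rotational invariance of Lebesgue measure and of $\|\cdot\|_{2}$ makes $\bar{g}$ a density with the same mean-squared cost as $g$, and since $\nabla \bar{g}(v) = \int R^{\top} (\nabla g)(Rv)\, dR$, moving the norm inside the integral gives
\begin{align}
\|\nabla \bar{g}(v)\|_{2} \leq \int_{SO(n)} \|\nabla g(Rv)\|_{2}\, dR \leq \epsilon\, \bar{g}(v).
\end{align}
So it suffices to minimize over radial densities $g(v) = h(\|v\|_{2})$. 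For such densities $\|\nabla g(v)\|_{2} = |h'(r)|$, and passing to spherical coordinates (with $\omega_{n-1}$ the surface area of the unit $(n-1)$-sphere) turns the problem into a 1D LP on $[0,\infty)$:
\begin{align}
\underset{h \in AC([0,\infty)\rightarrow \reals_{+})}{\text{minimize}} \quad & \omega_{n-1}\int_{0}^{\infty} r^{n+1} h(r)\, dr, \\
\text{s.t.} \quad & \omega_{n-1}\int_{0}^{\infty} r^{n-1} h(r)\, dr = 1, \\
& -\epsilon h(r) \leq h'(r) \leq \epsilon h(r), \; \forall r \geq 0.
\end{align}

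Second, I would dualize in the style of \eqref{eqn:laplaceOptimality1D:3}. Assigning $\lambda \in \reals$ to the normalization and a signed multiplier $\eta(r) = \mu(r) - \kappa(r)$ to the two-sided Lipschitz bound, integration by parts against the weight $r^{n-1}$ yields a dual whose pointwise constraint takes the weighted form
\begin{align}
\bigl( r^{n-1}\eta(r) \bigr)' + \epsilon\, r^{n-1}|\eta(r)| \leq r^{n+1} - \lambda r^{n-1},
\end{align}
together with appropriate vanishing conditions at $r=0$ and sign conditions as $r\rightarrow\infty$. The candidate $l_{2}^{n}(v) \propto e^{-\epsilon \|v\|_{2}}$ is absolutely continuous, saturates $\|\nabla l_{2}^{n}\|_{2} = \epsilon\, l_{2}^{n}$, and a direct spherical-coordinate computation yields $\expe\|V\|_{2}^{2} = n(n+1)/\epsilon^{2}$. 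For each $\lambda < n(n+1)/\epsilon^{2}$, I would construct a dual-feasible $\eta$ as the unique solution of the weighted initial value problem
\begin{align}
\bigl( r^{n-1}\eta(r) \bigr)' + \epsilon\, r^{n-1}|\eta(r)| = r^{n+1} - \lambda r^{n-1}, \quad \eta(0) = 0,
\end{align}
in direct analogy with \eqref{eqn:laplaceOptimality1D:4}, verify the sign of $\lim_{r\rightarrow\infty}\eta(r)$, and invoke weak duality.

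The main obstacle I expect is this weighted ODE. Symmetrization is the key conceptual move, but once written down it is short. The technical work lies in the fact that the factor $r^{n-1}$ changes both the qualitative behavior of $\eta(r;\lambda)$ and the critical value $\lambda^{\ast}$ at which $\lim_{r\rightarrow\infty} \eta(r;\lambda)$ flips sign, and it is this threshold that must come out to exactly $n(n+1)/\epsilon^{2}$ in order for the duality gap to close at the Laplace cost. Establishing this monotone sign change, the analog of Figure \ref{laplacianOptimalityPlot} for the weighted equation, is where the factor $n(n+1)$ genuinely has to be earned.
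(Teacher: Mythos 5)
Your proposal is correct, and it reaches the same radial ODE and the same weak-duality endgame as the paper, but it gets there by a genuinely different reduction. The paper keeps the primal in $n$ dimensions, writes the $\ell_{2}$ constraint as a family of directional constraints $\nabla g(v)\cdot\hat{a}\leq\epsilon g(v)$ indexed by $\hat{a}\in\mathbb{S}^{n-1}$, dualizes to a measure-valued multiplier $\kappa(v,\hat{a})$ on $\reals^{n}\times\mathbb{S}^{n-1}$, and only then collapses to one dimension by \emph{guessing} a dual ansatz concentrated as Dirac masses along $\pm v/\|v\|_{2}$; this is legitimate because restricting the dual feasible set can only weaken the lower bound, and it yields exactly your constraint after dividing your weighted form $\bigl(r^{n-1}\eta\bigr)'+\epsilon r^{n-1}|\eta|\leq r^{n+1}-\lambda r^{n-1}$ by $r^{n-1}$. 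You instead symmetrize the primal over $SO(n)$ first, which shows the restriction to radial densities is lossless \emph{before} any dualization, and then you dualize an honest one-dimensional weighted LP. Your route buys a cleaner logical structure: the 1D weak-duality step is elementary, and you avoid distributional dual variables on the sphere entirely; the price is that you must justify that $\bar{g}$ stays in the feasible class (interchange of gradient and Haar integral, preservation of absolute continuity), which is routine but should be stated. Two small points to flag: first, the "vanishing condition at $r=0$" in your weighted dual is what replaces the paper's $\eta(0)=0$ and needs to be extracted from the integration by parts at the left endpoint, not just asserted; second, like the paper, you defer the actual sign analysis of $\lim_{r\to\infty}\eta(r;\lambda)$ across the threshold $\lambda^{*}=n(n+1)/\epsilon^{2}$ to an appeal to the ODE's qualitative behavior --- the paper does no better (it cites a figure), but that analysis, in the style of the appendix's Table \ref{tbl:1DLaplaceOptimalityFullProof} adapted to the extra damping term $\tfrac{n-1}{r}\eta$, is where the proof is actually completed.
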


\begin{proof}
Once again, the optimal private mechanism is posed as an optimization problem:
\begin{align} \begin{split} \label{eqn:NDL2OptimalMechanism:1}
\underset{g : AC\left( \reals^{n}\rightarrow\reals_{+} \right) }{\text{minimize}} \quad		& \int_{\reals^{n}} g(v) \: v^{T}v \: d^{n}v \\
\text{s.t.} \quad														& \int_{\reals^{n}} g(v) d^{n}v = 1, \\
																				& \nabla g(v) \cdot \hat{a} \leq \epsilon g(v), \: \text{for a.e. } v\in\reals^{n}, \\
																				& \quad \forall \hat{a}\in \reals^{n}, \|\hat{a}\|_{2}=1,
\end{split} \end{align}
where the last constraint is equivalent to the privacy constraint $\|\nabla g(v)\|_{2}^{*}\leq \epsilon g(v)$. Consider the dual variables $\lambda\in\reals$ and $\kappa: \reals^{n} \times \mathbb{S}^{n-1}\rightarrow\reals_{+}$, where $\mathbb{S}^{n-1} = \{\hat{a}\in\reals^{n} : \: \|\hat{a}\|_{2}=1 \}$. Moreover, set $\eta(v) = \kappa(v)-\mu(v)$, and formulate the dual problem of Problem (\ref{eqn:NDL2OptimalMechanism:1}):
\begin{align} \begin{split} \label{eqn:NDL2OptimalMechanism:2}
\underset{\lambda\in\reals, \kappa\in \reals^{n}\times\mathbb{S}^{n-1}\rightarrow\reals_{+}}{\text{maximize}} \quad		& \lambda	\\
\text{s.t.} \quad									& \nabla \cdot \left( \int_{\mathcal{S}^{n}} \hat{a} \kappa(v,\hat{a}) d\hat{a} \right) \\
& \quad + \epsilon \int_{\mathcal{S}^{n}} \kappa(v,\hat{a}) d\hat{a}  \leq v^{T}v -\lambda, \\
															& \lim_{\|v\|_{2}\rightarrow\infty} \int_{\mathcal{S}^{n}} \hat{a}\cdot v \: \kappa(v,\hat{a}) d\hat{a} \geq 0.
\end{split} \end{align}
A feasible solution for the primal problem (\ref{eqn:NDL2OptimalMechanism:1}) is:
\begin{align} \label{eqn:NDL2OptimalMechanism:5}
g(v) = \frac{ \epsilon^{n} \Gamma\left(\frac{n}{2}+1\right) }{ \pi^{\frac{n}{2}} \Gamma(n+1) } e^{-\epsilon \|v\|_{2}},
\end{align}
with mean-squared error $\lambda^{*}=\frac{n(n+1)}{\epsilon^{2}}$. On the other hand, there exists a dual feasible solution for Problem \eqref{eqn:NDL2OptimalMechanism:2} with cost arbitrarily close to $\lambda^{*}$. Consider a dual feasible solution of the form:
\begin{align} \begin{split}
\kappa(v,\hat{a}) 	& = \left[ \eta(\|v\|_{2}) \right]^{+} \delta\left( \hat{a} + \frac{v}{\|v\|_{2}} \right) \\
					& \quad + \left[ \eta(\|v\|_{2}) \right]^{-} \delta\left( \hat{a} - \frac{v}{\|v\|_{2}} \right),
\end{split} \end{align}
where $\delta$ is Dirac's delta function on the unit $n$-sphere $\mathbb{S}^{n-1}$, $\eta:\reals_{+}\rightarrow\reals$ is a suitable function, and $[\cdot]^{+}$ and $[\cdot]^{-}$ are the positive and negative parts of a function, respectively. Then, we can reduce the feasible region of Problem \eqref{eqn:NDL2OptimalMechanism:2} and rewrite it as
\begin{align} \begin{split} \label{eqn:NDL2OptimalMechanism:3}
\underset{ \lambda\in\reals, \eta: \reals_{+}\rightarrow\reals }{\text{maximize}} \quad	& \lambda \\
\text{s.t.} \quad																			& \eta'(r) + \frac{n-1}{r} \eta(r) + \epsilon |\eta(r)| \leq r^{2} - \lambda \\
																									& \lim_{r\rightarrow\infty} \eta(r) \geq 0.
\end{split} \end{align}
Similarly to the proof of Theorem \ref{thm:NDL1OptimalMechanism}, a feasible solution $(\lambda,\eta)$ of Problem \eqref{eqn:NDL2OptimalMechanism:3} of the following form is constructed:
\begin{align} \begin{split} \label{eqn:NDL2OptimalMechanism:4}
 \eta'(r) + \frac{n-1}{r} \eta(r) + \epsilon |\eta(r)| = r^{2} - \lambda \text{ and } \eta(0)=0
\end{split} \end{align}
Figure \ref{fig:typicalDualSolutionNDL2} shows the solution of the initial value problem \eqref{eqn:NDL2OptimalMechanism:4} for different values of $\lambda$. For $\lambda < \lambda^{*}$, the solution is feasible and, thus, the optimality of the density (\ref{eqn:NDL2OptimalMechanism:5}) for the initial value problem \eqref{eqn:NDL2OptimalMechanism:1} is established.

Again, for $\lambda=\lambda^{*}$, the dual solution $\eta(r) = - \frac{r (r\epsilon + n + 1)}{\epsilon^{2}} $ is infeasible as a result of the infinite-dimensional nature of problem (\ref{eqn:NDL2OptimalMechanism:4}).
\end{proof}

\begin{figure} \begin{center}
\includegraphics[width=.9\linewidth]{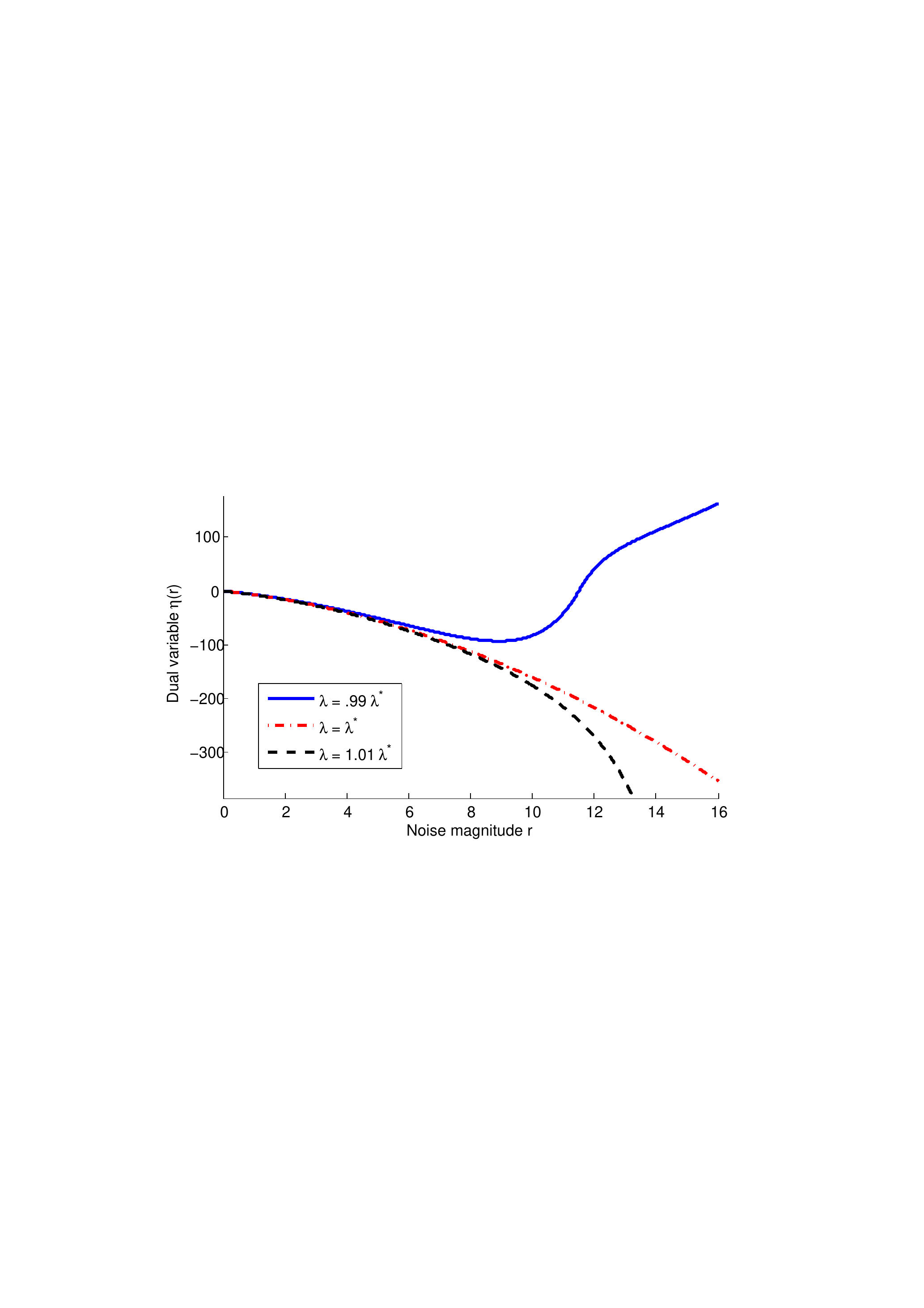}
\caption{The dual variable $\eta(v)$ is the solution to the intial value problem $\eta'(r) + \frac{n-1}{r} \eta(r) + \epsilon |\eta(r)| = r^{2} - \lambda$, $\eta(0)=0$ for different values of $\lambda$. A feasible solution needs to satisfy the boundary constraint $\lim_{v\rightarrow\infty} \eta(v) \geq 0$. For $\lambda<\lambda^{*}$, the solution $\eta$ is feasible.} \label{fig:typicalDualSolutionNDL2}
\end{center} \end{figure}

Sample from distribution \eqref{eqn:NDL2OptimalMechanism:5} can be efficiently generated. The magnitude $r=\|v\|_{2}$ of the noise is drawn from the Gamma distribution $r\sim \frac{\epsilon^{n}}{\Gamma(n)} e^{-\epsilon r} r^{n-1}$ and the direction $\hat{v}=\frac{v}{\|v\|_{2}}$ is uniformly sampled from the sphere $\mathbb{S}^{n-1}$.

\subsection{Multiple Users with High-Dimensional Private Data}
In this section, the case of multiple users contributing their high-dimensional sensitive data is explored. Specifically, consider $n$ individuals. Each individual contributes his $m$-dimensional sensitive data $u_{i}\in\reals^{n}, \: n\in\{1,\ldots,n\}$. Furthermore, we are interested in releasing a privacy-aware version of the sensitive data under an adjacency relation that preserves both individual's participation and each user's data. These aspects of privacy are captured by adjacency relation (\ref{eqn:hybridAdjacency:2}) derived earlier.

In particular, let the space of private data be $\mathcal{U}=\reals^{n\cdot m}$ and consider private mechanisms $Q$ that add input-independent noise $V\sim g$ to the private data $u$. Similarly to the previous case, a version of the Laplace mechanism provides the optimal mean-squared error.

\begin{theorem} \label{thm:NDL1L2OptimalMechanism}
Consider the $\epsilon$-Lipschitz private (with respect to the adjacency relation (\ref{eqn:hybridAdjacency:2})) mechanism $Q:\reals^{n \cdot m}\rightarrow \Delta\left( \reals^{n \cdot m} \right)$ of the form $Qu = u + V$, with $V\sim g \in \Delta \left( \reals^{n\cdot m} \right)$. Then, the Laplace mechanism that adds oblivious noise with density $g = l^{n,m}(v) \propto e^{-\epsilon \sum_{i=1}^{n} \|v_{i}\|_{2}}$ minimizes the mean-squared-error:
\begin{align}
\underset{V\sim g}{\expe}\|V\|^{2} \geq \underset{ V\sim l^{n,m} }{\expe} \|V\|_{2}^{2} = \frac{nm(m+1)}{\epsilon^{2}}.
\end{align}
\end{theorem}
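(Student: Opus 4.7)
My plan is to follow the linear programming duality strategy already used in the proofs of Theorems \ref{thm:NDL1OptimalMechanism} and \ref{thm:NDL2OptimalMechanism}, exploiting the fact that the composite adjacency relation \eqref{eqn:hybridAdjacency:2} decomposes, as shown in \eqref{eqn:hybridAdjacency:3}, into one $\ell_{2}$-sensitivity constraint per user block $v_{i}\in\reals^{m}$. The objective, the per-block privacy constraint, and the candidate Laplace density all separate over the $n$ blocks, so the proof amounts to gluing together $n$ copies of the $m$-dimensional $\ell_{2}$-case from Theorem \ref{thm:NDL2OptimalMechanism}.

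First I would pose the optimal mechanism as the infinite-dimensional linear program
\begin{align}
\begin{split}
\underset{g: AC(\reals^{nm}\rightarrow\reals_{+})}{\text{minimize}} \quad & \int_{\reals^{nm}} g(v)\, v^{T}v\, d^{nm}v \\
\text{s.t.}\quad & \int_{\reals^{nm}} g(v)\, d^{nm}v = 1, \\
& \nabla_{v_{i}} g(v) \cdot \hat{a} \leq \epsilon g(v), \\
& \quad \forall i,\; \hat{a}\in\mathbb{S}^{m-1},\; \text{a.e. } v.
\end{split}
\end{align}
The candidate $l^{n,m}(v) \propto e^{-\epsilon \sum_{i}\|v_{i}\|_{2}}$ factorizes as $\prod_{i=1}^{n} l_{2}^{m}(v_{i})$, where each $l_{2}^{m}$ is the $\ell_{2}$-Laplace density of Theorem \ref{thm:NDL2OptimalMechanism}. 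Primal feasibility is then immediate block-by-block, and the mean-squared error splits as $\expe\|V\|_{2}^{2} = \sum_{i=1}^{n} \expe\|V_{i}\|_{2}^{2} = n\cdot \frac{m(m+1)}{\epsilon^{2}}$, matching the target $\frac{nm(m+1)}{\epsilon^{2}}$.

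Next I would introduce dual variables $\lambda\in\reals$ and $\kappa_{i}:\reals^{nm}\times\mathbb{S}^{m-1}\rightarrow\reals_{+}$, and derive a dual whose constraint is additively separable across $i$,
\begin{align}
\sum_{i=1}^{n} \left\{ \nabla_{v_{i}} \cdot \int_{\mathbb{S}^{m-1}} \hat{a}\,\kappa_{i}(v,\hat{a})\, d\hat{a} + \epsilon \int_{\mathbb{S}^{m-1}} \kappa_{i}(v,\hat{a})\, d\hat{a} \right\} \leq v^{T}v - \lambda,
\end{align}
since $v^{T}v = \sum_{i} v_{i}^{T}v_{i}$. I would then lift the radial $\ell_{2}$-dual certificate from Theorem \ref{thm:NDL2OptimalMechanism}: fix $\lambda_{1} < \frac{m(m+1)}{\epsilon^{2}}$, let $\eta$ solve the initial value problem \eqref{eqn:NDL2OptimalMechanism:4} in dimension $m$, and set
\begin{align}
\begin{split}
\kappa_{i}(v,\hat{a}) &= \left[\eta(\|v_{i}\|_{2})\right]^{+} \delta\left(\hat{a}+\frac{v_{i}}{\|v_{i}\|_{2}}\right) \\
&\quad + \left[\eta(\|v_{i}\|_{2})\right]^{-} \delta\left(\hat{a}-\frac{v_{i}}{\|v_{i}\|_{2}}\right), \\
\lambda &= n\lambda_{1}.
\end{split}
\end{align}
Each summand reproduces the per-block inequality $\eta'(r)+\frac{m-1}{r}\eta(r)+\epsilon|\eta(r)| \leq r^{2}-\lambda_{1}$ with $r=\|v_{i}\|_{2}$, so summing over $i$ produces a dual feasible point of value $n\lambda_{1}$, which can be taken arbitrarily close to $\frac{nm(m+1)}{\epsilon^{2}}$. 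Weak duality then matches the two costs and concludes the proof. The main obstacle is justifying that this block decoupling is valid---that no cross-block dual variables are needed---which rests on the additive separability of both $v^{T}v$ and the per-user gradient constraints, exactly the observation that allowed Theorem \ref{thm:NDL1OptimalMechanism} to reduce to $n$ single-dimensional duals. Once this is in place, no analytic machinery beyond Theorem \ref{thm:NDL2OptimalMechanism} is required.
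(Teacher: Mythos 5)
Your proposal is correct and follows essentially the same route as the paper: pose the separable primal LP, use the product form $l^{n,m}=\prod_{i}l_{2}^{m}(v_{i})$ for primal feasibility and cost $\frac{nm(m+1)}{\epsilon^{2}}$, and glue $n$ copies of the radial dual certificate from Theorem \ref{thm:NDL2OptimalMechanism} with $\lambda=n\lambda_{\ell_{2}}$, closing the gap by weak duality. The only (cosmetic) difference is that you pass explicitly through the sphere-indexed dual variables $\kappa_{i}$ before reducing to the radial form, and you correctly write the $\frac{m-1}{r}$ coefficient where the paper's dual constraint has an apparent typo ($\frac{n-1}{r_{i}}$).
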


\begin{proof}
The primal optimization problems is as follows
\begin{align} \begin{split}
\underset{ g: AC(\reals^{n\cdot m}\rightarrow \reals_{+}) }{\text{minimize}} \quad	& \int_{\reals^{n\cdot m}} g(v) v^{T}v dv \\
\text{s.t.} \quad															& \int_{\reals^{n\cdot m}} g(v) dv = 1, \\
																					& \| \nabla_{i} g(v) \|_{2} \leq \epsilon g(v), \; \forall i\in[n], \forall v\in\reals^{n},													
\end{split} \end{align}
where $\nabla_{i} g = \begin{bmatrix} \frac{\partial g}{\partial v_{(i-1)\cdot m + 1}} & \ldots & \frac{\partial g}{\partial v_{i\cdot m}}\end{bmatrix}$ and $[n]=\{1,\ldots,n\}$. The dual problem is formulated:
\begin{align} \begin{split} \label{eqn:NDL1L2OptimalMechanism:2}
\underset{ \lambda\in\reals, \eta_{i}: \reals_{+}\rightarrow\reals }{\text{maximize}} \quad		& \lambda	\\
\text{s.t.} \quad																				& \sum_{i=1}^{n} \left\{ \eta_{i}'(r_{i}) + \frac{n-1}{r_{i}} \eta_{i}(r_{i}) + \epsilon |\eta_{i}(r_{i})| \right\}  \\	
																								& \leq \sum_{i=1}^{n} r_{i}^{2} - \lambda, \text{ and } \lim_{r_{i}\rightarrow\infty} \eta_{i}(r_{i}) \geq 0, \; \forall i 
\end{split} \end{align}
A pair of feasible primal and dual solutions is constructed:
\begin{align} \begin{split}
& g = \left(\frac{ \epsilon^{m} \Gamma(\frac{m}{2}+1) }{ m\pi^{\frac{m}{2}}\Gamma(m) }\right)^{n} e^{-\epsilon \sum_{i=1}^{n} \|v_{i}\|_{2}},	\\
& \eta_{i}(r_{i}) = \eta_{\ell_{2}}(r_{i}), \text{ and } \lambda = n \lambda_{\ell_{2}},
\end{split} \end{align}
where $(\lambda_{\ell_{2}}, \eta_{\ell_{2}})$ is the dual solution of Theorem \ref{thm:NDL2OptimalMechanism}. Weak duality establishes the optimality of the solution. 
\end{proof}

\section{Discussion} \label{sec:discussion}
In this work, we explored Lipschitz privacy, which is a version of differential privacy that is adapted for metric spaces. Moreover, we proved that, for a given privacy level, the Laplace mechanism minimizes the mean-squared error among all single-dimensional mechanisms that add input-independent noise. The design of the optimal private mechanism is initially formulated as a linear program. Then, the optimality of the Laplace mechanism is established by constructing a pair of primal and dual feasible solutions with zero duality gap. Next, the result is extended to high-dimensional real spaces equipped with the $\ell_{1}$-norm. The case of $\ell_{1}$-norm corresponds to the case of providing privacy guarantees with respect to participation of any individual. Furthermore, the optimality of a variation of the Laplace mechanism is established for real spaces equipped with the $\ell_{2}$-norm. In this case, the privacy guarantees are invariant under rotations and, thus, this choice of norm captures the case where every individual provides high-dimensional sensitive data. A combination of the two results provides the optimal privacy-aware approximation of the aggregation of high-dimensional sensitive data of multiple individuals. Future directions include optimality guarantees for more general classes of queries beyond identity queries. Moreover, it is useful to study optimality results for other composite adjacency relations such as that proposed in \cite{koufogiannis14}.

\bibliographystyle{unsrt}
\bibliography{laplaceOptimality}

\vspace{-15pt}
\appendix

\vspace{-4pt}
\section{Extended Proofs} \label{app:proofs}
Theorem \ref{thm:laplaceOptimality1D} establishes the optimality of the Laplace mechanism for a single-dimensional identity query. A more technical proof is presented here. First, we prove that, for Lipschitz differential privacy guarantees to hold, the additive noise should possess density.

\begin{lemma} \label{lem:densityExists}
Consider the $\epsilon$-Lipschitz private mechanism $Q$ that uses oblivious, additive noise $V$. Specifically, let $Qu= u+V$, where $V$ has probability measure $g\in\Delta\left( V \right)$. Then $V$ possesses density.
\end{lemma}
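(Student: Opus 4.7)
The plan is to show that $g$, as a Borel probability measure on $\reals^{n}$, is absolutely continuous with respect to Lebesgue measure $\mu$; the Radon-Nikodym theorem then immediately supplies the desired density of $V$.

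First, I would translate the Lipschitz privacy condition \eqref{eqn:lipschitzCondition} using the additive structure $Qu = u+V$. Since $\mathbb{P}(Qu \in S) = g(S - u)$, condition \eqref{eqn:lipschitzCondition} is equivalent to the uniform translate bound
\begin{align}
e^{-\epsilon \|t\|} g(A) \leq g(A + t) \leq e^{\epsilon \|t\|} g(A)
\end{align}
for every Borel set $A \subseteq \reals^{n}$ and every vector $t$. Note that the case $g(A)=0$ forces $g(A+t)=0$ for all $t$ by the same condition (since $\ln 0 = -\infty$), so both inequalities extend trivially. This translate comparison is the only structural input I will use going forward.

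Next, I would argue by contradiction: suppose there is a Borel set $A$ with $\mu(A) = 0$ yet $g(A) > 0$. Writing $A$ as the countable union of its intersections with bounded boxes, I may assume $A$ is bounded. For any radius $r > 0$, integrating the translate lower bound over the ball $B_r = \{t : \|t\| \leq r\}$ gives
\begin{align}
\int_{B_r} g(A + t) \, dt \geq \mu(B_r) \, e^{-\epsilon r} g(A) > 0.
\end{align}
On the other hand, Fubini's theorem applied to the jointly Borel-measurable indicator $(t,v) \mapsto \mathbf{1}_{A+t}(v)$ rewrites the same integral as
\begin{align}
\int_{B_r} g(A + t) \, dt = \int_{\reals^{n}} \mu\bigl(B_r \cap (v - A)\bigr) \, dg(v) \leq \mu(A) = 0,
\end{align}
using translation and reflection invariance of Lebesgue measure. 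The two estimates contradict each other, so $g(A) = 0$ whenever $\mu(A) = 0$, which is the required absolute continuity.

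The main subtlety I expect is the Fubini step: one must verify joint Borel-measurability of $(t,v) \mapsto \mathbf{1}_{A+t}(v)$, which follows because $\{(t,v) : v-t \in A\}$ is the preimage of a Borel set under the continuous map $(t,v)\mapsto v-t$, and one must check $\sigma$-finiteness/integrability, which is immediate because $g$ is a probability measure and $B_r$ has finite Lebesgue measure. A secondary technical point is the reduction to bounded $A$, handled by monotone convergence along any bounded exhaustion of $\reals^{n}$. Once $g \ll \mu$ is established, Radon-Nikodym delivers the density of $V$ and concludes the proof.
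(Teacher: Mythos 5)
Your proof is correct, but it takes a genuinely different route from the paper's. The paper argues entirely through the one-dimensional cumulative distribution function: substituting $S=(-\infty,0]$ and the inputs $u_{1}=-x$, $u_{2}=-y$ into the Lipschitz-privacy condition yields $|\ln G(x)-\ln G(y)|\leq\epsilon|x-y|$ for $G(x)=\prob(V\leq x)$, and since $G\leq 1$ this makes $G$ Lipschitz, hence absolutely continuous, hence differentiable almost everywhere with $G'$ serving as the density. Your argument instead proves $g\ll\mu$ directly from the quasi-invariance bound $e^{-\epsilon\|t\|}g(A)\leq g(A+t)$ by averaging over translates in a ball and applying Tonelli--Fubini, then invokes Radon--Nikodym; the measurability and $\sigma$-finiteness points you flag are exactly the right ones and are handled correctly, and your observation that $g(A)=0$ propagates to all translates closes the degenerate case. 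The trade-off is this: the paper's CDF argument is more elementary and comes with a free quantitative payoff (the density is bounded a.e.\ by $\epsilon$, up to the slightly loose middle step in the paper's chain of inequalities), but it is intrinsically tied to the real line; your translation-averaging argument is dimension-free and would justify the existence of densities in the $\ell_{1}$, $\ell_{2}$, and composite settings of Theorems \ref{thm:NDL1OptimalMechanism}--\ref{thm:NDL1L2OptimalMechanism} without modification, at the cost of losing the explicit bound and leaning on Radon--Nikodym. Both are complete proofs of the stated lemma.
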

\begin{proof}
We prove that the cumulative density function $G$ of $V$
\begin{align}
G(x) = \prob( V\leq x)
\end{align}
is absolutely continuous. For any measurable $S\subseteq\reals$ and any $u_{1},u_{2}\in\reals$, Lipschitz privacy dictates that:
\begin{align}
\left| \ln \prob(Qu_{1}\in S) - \ln \prob(Qu_{2}\in S) \right| \leq \epsilon | u_{1} - u_{2} |
\end{align}

Let $S=(-\infty,0]$, $u_{1}=-x$, and $u_{2}=-y$, with $x<y$. Then:
\begin{align}
\left| \ln \prob(V\leq x) - \ln \prob(V\leq y) \right| & \leq \epsilon |x-y| \Rightarrow \\
\left| \prob(V\leq x) - \prob(V\leq y) \right| & \leq \prob(V\leq x) \epsilon |x-y| \Rightarrow \\
\left| G(x) - G(y) \right| & \leq \epsilon |x-y|
\end{align}
Therefore, $G$ is absolutely continuous and, hence, $V$ possesses density. Abusing notation, we denote the density of the noise $V$ with $g$.
\end{proof}

We now provide a technical proof of Theorem \ref{thm:laplaceOptimality1D}.
\begin{proof}
Consider the $\epsilon$-Lipschitz differential private mechanisms that use additive, oblivious noise $V$ with probability measure $g$:
\begin{align}
Q:\reals\rightarrow\Delta(\reals), \quad Qu = u + V, \text{ where } V\sim g.
\end{align}
Solving for the optimal, in the mean-squared error sense, probability measure is posed as a linear, but infinite-dimensional program:
\begin{align} \label{thm.prob1:laplaceOptimality1D} \begin{split}
\underset{g\in\Delta(\reals)}{ \text{minimize} } \quad	& \underset{ V\sim g }{\expe} V^{2} \\
						\text{s.t.} 		\quad & \text{$g$ is $\epsilon$-Lipschitz diff. private}
\end{split} \end{align}
Lemma \ref{lem:densityExists} establishes that $V$ possesses density which is abusively denoted by $g(v)$. Therefore, Problem \eqref{thm.prob1:laplaceOptimality1D} is equivalently written as:
\begin{align} \label{thm.prob2:laplaceOptimality1D} \begin{split} 
\underset{g:\mathcal{C}^{1}\left( \mathbb{R}\rightarrow\mathbb{R} \right)}{\text{minimize}} \quad & \int_{\mathbb{R}} g(v) v^{2} dv \\
\text{s.t.} \quad 							& \int_{\mathbb{R}} g(v) dv = 1, \text{ and } g(v) \geq 0, \: \forall v, \\
											& -\epsilon g(v) \leq \liminf_{\delta\rightarrow0} \frac{g(v+\delta)-g(v)}{\delta}, \: \forall v \\
											& \limsup_{\delta\rightarrow0} \frac{g(v+\delta)-g(v)}{\delta} \leq \epsilon g(v), \: \forall v.
\end{split} \end{align}
Problem \eqref{thm.prob2:laplaceOptimality1D} is an infinite-dimensional linear program with infinite many constraints, thus, it is unclear though whether the minimum is achievable. The Laplace distribution $l_{\epsilon}(v) = \frac{\epsilon}{2} e^{-\epsilon |v|}$ is a feasible solution with mean error $\frac{2}{\epsilon^{2}}$. We now discritize, dualize and take limits in order to compute the dual problem. As a result, we prove that the dual variable is differentiable and we retrieve the formulation of the dual problem. Consider $N$ discrete points:
\begin{align}
v_{i} = - M + i \cdot \nu, \quad i\in\{1,\ldots,N\}
\end{align}
where $\nu = \frac{2M}{N-1}$ is the discritization step and $M$ is the truncation limit. For $g_{i} = g(v_{i})$, the original optimization is problem is now approximated by its discritized version:
\begin{align} \begin{split} 
\underset{\{g_{i}\}_{i=1}^{N}\in\reals^{N}}{\text{minimize}} \; & \sum_{i=1}^{N} g_{i} v_{i}^{2} \nu \\
\text{s.t.} \; 							& \sum_{i=1}^{N} g_{i} \nu = 1, \text{ and } g_{i} \geq 0, \; \forall i, \\
											& -\epsilon \cdot \frac{g_{i} + g_{i+1}}{2} \leq \frac{g_{i+1} - g_{i}}{\nu} \leq \epsilon \cdot \frac{g_{i} + g_{i+1}}{2}, \: \forall i. \\
\end{split} \end{align}
Let $\lambda\in\reals$, and $\kappa_{i},\mu_{i}\in\reals_{+}$ with $i\in\{1,\ldots,N-1\}$ be the dual variables for the first and the second constraint, respectively. The Lagrangian of the optimization problem is computed and minimized over $\{g_{i}\}_{i=1}^{N}\in\reals^{N}_{+}$:
\begin{align}
\mathcal{L}(g,\lambda,\kappa,\mu)  = 	& \sum_{i}^{N} g_{i} v_{i}^{2} \nu + \lambda - \lambda \sum_{i=1}^{N} g_{i} \nu \\
										& + \sum_{i=1}^{N-1} \left[ -\epsilon \kappa_{i} \frac{g_{i}+g_{i+1}}{2} - \kappa_{i} \frac{g_{i+1}-g_{i}}{\nu} \right] \\
										& + \sum_{i=1}^{N-1} \left[ \mu_{i} \frac{g_{i+1}-g_{i}}{\nu} - \epsilon \mu_{i} \frac{g_{i}+g_{i+1}}{2}  \right]
\end{align}
%Equivalently:
%\begin{align}
%& \mathcal{L}(g,\lambda,\kappa,\mu)  = \lambda + \sum_{i=2}^{N-1} g_{i} \left( v_{i}^{2} \nu - \lambda \nu - \epsilon \frac{\kappa_{i-1}+\kappa_{i}}{2} \right. \\
%										& \left. - \epsilon \frac{\mu_{i-1}+\mu_{i}}{2} + \frac{\kappa_{i}-\kappa_{i-1}}{\nu} - \frac{\mu_{i}-\mu_{i-1}}{\nu} \right) \\
%										& + g_{1} \left( v_{1}^{2} \nu - \lambda \nu - \epsilon \frac{\kappa_{1}+\mu_{1}}{2} - \frac{\mu_{1}-\kappa_{1}}{\nu} \right) \\
%										& + g_{N} \left( v_{N}^{2} \nu - \lambda \nu - \epsilon \frac{\kappa_{N-1}+\mu_{N-1}}{2} + \frac{\mu_{N-1}-\kappa_{N-1}}{\nu}  \right)
%\end{align}
Thus, the dual problem is the following:
\begin{align}
\underset{\lambda,\{\kappa_{i}\},\{\mu_{i}\}}{\text{maximize}} \; & \lambda \\
\text{s.t.} \;	&  \epsilon \frac{\kappa_{i-1}+\kappa_{i}}{2} + \epsilon \frac{\mu_{i-1}+\mu_{i}}{2} - \frac{\kappa_{i}-\kappa_{i-1}}{\nu} \\
							& + \frac{\mu_{i}-\mu_{i-1}}{\nu} \leq v_{i}^{2} \nu - \lambda \nu, \: \forall i\in\{2,\ldots,N-1\}, \\
							& v_{1}^{2} \nu - \lambda \nu - \epsilon \frac{\kappa_{1}+\mu_{1}}{2} - \frac{\mu_{1}-\kappa_{1}}{\nu} \geq 0, \\
							& v_{N}^{2} \nu - \lambda \nu - \epsilon \frac{\kappa_{N-1}+\mu_{N-1}}{2} \\
							& + \frac{\mu_{N-1}-\kappa_{N-1}}{\nu} \geq 0, \\
							& \kappa_{i} \geq 0 \text{ and } \mu_{i} \geq 0, \quad \forall i\in\{1,\ldots,N\}
\end{align}
Complementary slackness of the primal problem suggests that, for each $i$, either $\kappa_{i}=0$ or $\mu_{i}=0$. Therefore, we seek dual feasible solutions such that $\eta_{i} = \mu_{i}-\kappa_{i}$  and $|\eta_{i}| = \mu_{i} + \kappa_{i}$:
\begin{align} \begin{split}
\underset{\lambda,\{\eta_{i}\}}{\text{maximize}} \quad & \lambda \\
\text{s.t.} \quad	&  \epsilon \frac{|\eta_{i-1}|+|\eta_{i}|}{2} + \frac{\eta_{i}-\eta_{i-1}}{\nu} \leq v_{i}^{2} \nu - \lambda \nu, \\
							& \quad \forall i\in\{2,\ldots,N-1\}, \\
							& v_{1}^{2} \nu - \lambda \nu - \epsilon \frac{|\eta_{1}|}{2} - \frac{\eta_{1}}{\nu} \geq 0, \\
							& v_{N}^{2} \nu - \lambda \nu - \epsilon \frac{|\eta_{N-1}|}{2} + \frac{\eta_{N-1}}{\nu} \geq 0
\end{split} \end{align}
We first set $N = \frac{2M}{\nu} + 1$ and let $M\rightarrow\infty$ and, then, let $\nu\rightarrow0$. The discritized dual problem convergences to the continuous one:
\begin{align} \begin{split}
\underset{\lambda\in\mathbb{R}, \eta:\mathcal{C}^{1}\left( \mathbb{R}\rightarrow\mathbb{R} \right)}{\text{maximize}} \quad & \lambda \\
\text{s.t.} \quad				& \eta'(v) + \epsilon |\eta(v)| \leq v^{2} - \lambda, \: \forall v\in\mathbb{R}, \\
								& \lim_{v\rightarrow\infty} \eta(v) \geq 0, \quad \lim_{v\rightarrow-\infty} \eta(v) \leq 0
\end{split} \end{align}
The last step of the proof includes building a feasible dual solution for $\lambda = \frac{2-\delta}{\epsilon^{2}}$, for small, positive values of $\delta$. Specifically, we fix $\lambda = \frac{2-\delta}{\epsilon^{2}}$ and solve the initial value problem:
\begin{align} \label{eqn:1DLaplaceOptimalityFullProof:eqn1}
\eta'(v) + \epsilon |\eta(v)| = v^{2} - \lambda, \text{ and } \eta(0)=0
\end{align}
Existence and uniqueness of solutions for the initial value problem \eqref{eqn:1DLaplaceOptimalityFullProof:eqn1} implies that the unique solution only needs to be checked that it satisfy the boundary constraints. Some technical analysis proves that, for small and positive values of $\delta$, the solution $\eta$ to the initial value problem \eqref{eqn:1DLaplaceOptimalityFullProof:eqn1} indeed satisfies the constraints:
\begin{align}
\lim_{v\rightarrow\infty} \eta(v) \geq 0 \text{ and } \lim_{v\rightarrow-\infty} \eta(v) \leq 0
\end{align}
Due to symmetry, we focus only on the case of $v\geq0$. Table \ref{tbl:1DLaplaceOptimalityFullProof} summarizes the signs of $\eta$ and its derivatives. Specifically, the solution $\eta$ is negative until $\hat{v}_{3}$. While $\eta$ remains negative, it satisfies the initial value problem:
\begin{align}
\eta'(v) - \epsilon \eta(v) = v^{2} - \lambda \text{ and } \quad \eta(0)=0 
\end{align}
 The single root of the second derivative is analytically computed:
\begin{align}
\hat{v}_{1} = \frac{ \ln2 - \ln\delta }{\epsilon}
\end{align}
At $\hat{v}_{3}$, the dual function $\eta$ becomes positive and satisfies the initial value problem \eqref{eqn:1DLaplaceOptimalityFullProof:eqn2}:
\begin{align} \label{eqn:1DLaplaceOptimalityFullProof:eqn2}
\eta'(v) + \epsilon \eta(v) = v^{2} - \lambda \text{ and } \eta(\hat{v}_{3}) = 0 
\end{align}
The value $\hat{v}_{3}$ can become arbitrarily large. Indeed, it holds that $\hat{v}_{3}\geq \hat{v}_{1}$ and, for small enough values of $\delta$, $\hat{v}_{1}$ can become as large as needed. Therefore, for small enough values of $\delta$, the derivative of Equation \eqref{eqn:1DLaplaceOptimalityFullProof:eqn2} remains positive:
\begin{align}
\eta'(v) = \frac{2(v\epsilon - 1)}{\epsilon^{2}} + \frac{e^{(v_{3}-v)\epsilon} ( \delta - 2v_{3} \epsilon + v_{3}^{2} \epsilon^{2} )}{\epsilon^{2}} \geq 0,
\end{align}
for $v\geq \hat{v}_{3}$.

\begin{figure} \begin{center}
\includegraphics[width=.8\linewidth]{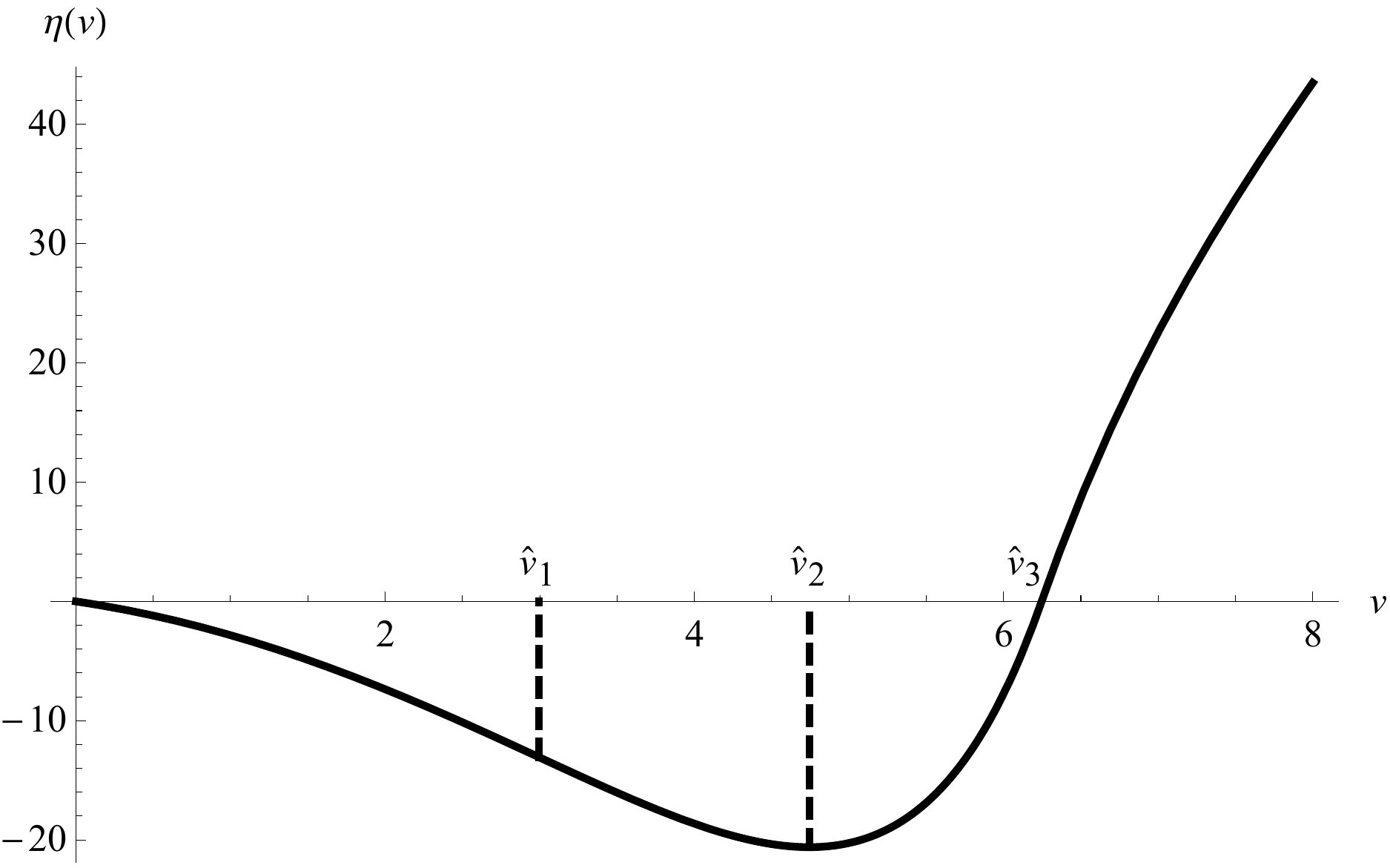}
\caption{The dual solution $\eta$ for small values of $\delta$. The function $\eta$ changes curvature at $\hat{v}_{1}$, becomes increasing at $\hat{v}_{2}$, and is zero at  $\hat{v}_{3}$. For small values of $\delta$, once $\eta$ becomes positive, it remains increasing and, thus, positive.} \label{fig:typicalDualSolution}
\end{center} \end{figure}

\begin{table} \begin{center} \begin{tabularx}{.75\linewidth}{l||cccccccc}
 \toprule
    $v$			&	$0$		&		&	$v_{1}$		&		&	$v_{2}$		&		&	$v_{3}$		&		 		\\ \midrule
    $\eta''$	& 	$-$		&	$-$	&  $0$  		&	$+$ &   $+$			&  $+$	&   $+$   		& $+$ 	    	\\
    $\eta'$		& 	$-$		&	$-$	&  $-$  		&	$-$ &   $0$			&  $+$	&   $+$   		& $+$ 	       	\\
    $\eta$		& 	$0$		&	$-$	&  $-$  		&	$-$ &   $-$			&  $-$	&   $0$   		& $+$ 	      	\\ \bottomrule
\end{tabularx}
\caption{Elementary analysis on the behaviour of function $\eta(v)$ for small and positive values of $\delta$.} 
\label{tbl:1DLaplaceOptimalityFullProof}
 \end{center} \end{table}

The cost of the constructed dual feasible solution is $\lambda=\frac{2-\delta}{\epsilon^{2}}$ and can be made as close to the cost of the Laplace distribution. Weak duality completes the proof.
\end{proof}

\begin{remark}
For $\lambda = \frac{2}{\epsilon^{2}}$, we consider the dual solution $\eta(v)=-\frac{1}{\epsilon^{2}}v(\epsilon|v|+2)$ which satisfies the differential equation. However, it fails to satisfy the boundary conditions since it quadratically explodes. Instead, for $\lambda > \frac{2}{\epsilon^{2}}$, the dual feasible explodes exponentially. Despite the qualitative difference between the two cases, they are both infeasible.
\end{remark}

\end{document}